\newtheorem{theorem}{Theorem}
\newtheorem{clm}{Claim}[theorem]
\newtheorem{lemma}{Lemma}
\def \no {\noindent}
\title{Independent Sets in Classes Related to Chair/Fork-free Graphs}
\author{T. Karthick\thanks{Computer Science Unit, Indian Statistical
Institute, Chennai Centre, Chennai-600113, India.
E-mail: {karthick@isichennai.res.in}}}
\begin{document}
\maketitle

\begin{abstract}
The \textsc{Maximum Weight Independent Set (MWIS)} problem on graphs with vertex weights
asks for a set of pairwise nonadjacent vertices of maximum total weight. MWIS is known to
be $NP$-complete in general, even under various restrictions. Let $S_{i,j,k}$ be the graph
consisting of three induced paths of lengths $i, j, k$ with a common initial vertex. The complexity of
the MWIS problem for $S_{1, 2, 2}$-free graphs, and for $S_{1, 1, 3}$-free graphs are open.
In this paper, we show that the MWIS problem can solved in polynomial time
for ($S_{1, 2, 2}$, $S_{1, 1, 3}$, co-chair)-free graphs, by analyzing the structure  of the subclasses of this class of graphs. This  extends some known results  in the literature.
\end{abstract}

\no{\bf Keywords}: Graph algorithms; Independent sets;  Claw-free graphs; Chair-free graphs;  Clique separators; Modular decomposition.

\section{Introduction}
For notation and terminology not defined here, we follow \cite{BLeS}.
Let $P_n$ and $C_n$ denote respectively
the path, and the cycle on $n$ vertices.
If $\cal{F}$ is a family of graphs, a graph $G$ is said to be
{\it $\cal{F}$-free} if it contains no induced subgraph isomorphic to
any graph in $\cal{F}$.

In a graph $G$, an {\it independent (or stable) set} is a subset of
mutually nonadjacent vertices in $G$.  The {\textsc{Maximum
Independent Set (MIS)}} problem asks for an independent set of $G$
with maximum cardinality.  The {\textsc{Maximum Weight Independent Set
(MWIS)}} problem asks for an independent set of total maximum weight
in the given graph $G$ with vertex weight function $w$ on $V(G)$. The M(W)IS
problem is well known to be $NP$-complete in general and hard to
approximate; it remains $NP$-complete even on restricted classes of
graphs \cite{Corneil, Poljak}.  On the other hand,
the M(W)IS problem is known to be solvable in polynomial time on many
graph classes such as: chordal graphs \cite{Frank}; $P_4$-free graphs \cite{CPS}; perfect graphs \cite{GLS}; $2K_2$-free graphs
\cite{Farber}; claw-free graphs \cite{Minty}; fork-free graphs \cite{LM}; apple-free graphs \cite{BLM};  and $P_5$-free graphs \cite{LVV}.

 For
integers $i, j, k \geq 0$, let $S_{i, j, k}$ denote a tree with
exactly three vertices of degree one, being at distance $i$, $j$ and
$k$ from the unique vertex of degree three. The graph $S_{0,1,2}$ is isomorphic to $P_4$ and the graph $S_{0,2,2}$ is isomorphic to $P_5$.  The graph $S_{1,1,1}$ is
called a {\it claw} and $S_{1, 1, 2}$ is called a {\it chair or fork}.
Also, note that $S_{i,j, k}$ is a subdivision of a claw, if $i, j, k \geq 1$.

Alekseev \cite{Alek} showed that the M(W)IS problem remains $NP$-complete on $H$-free graphs, whenever $H$ is connected, but neither a path nor a subdivision of the claw.  As mentioned above, the complexity status of the MWIS problem in the graphs classes defined
by a single forbidden induced subgraph of the form $S_{i,j,k}$ was solved for the case $i + j + k \leq 4$. However, for larger $i + j + k$,
the complexity of MWIS in $S_{i,j,k}$-free graphs is unknown. In particular, the class of
$P_6$-free graphs, the class of $S_{1, 2, 2}$-free graphs, and the
class of $S_{1, 1, 3}$-free graphs constitute the minimal classes,
defined by forbidding a single connected subgraph on six vertices, for
which the computational complexity of M(W)IS problem is unknown. Also, it is known that there is an $n^{O(\log ^2n)}$-time, polynomial-space algorithm for MWIS on $P_6$-free graphs \cite{LPL}. This implies that MWIS on $P_6$-free graphs is not
$NP$-complete, unless all problems in $NP$ can be solved in quasi-polynomial time.
On the other hand, MWIS is shown to be solvable in polynomial time for several subclasses of $S_{i,j,k}$-free graphs, for $i+j+k \geq 5$ such as:  ($P_6$, triangle)-free graphs \cite{BKM}; ($P_6, K_{1,p}$)-free graphs \cite{LR}; ($P_6, C_4$)-free \cite{MSK, Mosca-1}; ($P_6$, diamond)-free graphs \cite{Mosca-2}; ($P_6$, banner)-free graphs \cite{K2}; ($P_6$, co-banner)-free graphs \cite{Mosca-3};  ($P_6$, $S_{1,2,2}$, co-chair)-free graphs \cite{KM-2}; ($S_{1, 1, 3}$, banner)-free graphs \cite{KM}; and ($S_{1, 2, 2}$, bull)-free graphs \cite{KM}.
 It is also known that the MIS problem can
be solved in polynomial time for some subclasses of $S_{i, j, k}$-free
graphs such as: $S_{1,2,k}$-free planar graphs and $S_{1,k,k}$-free graphs of low degree \cite{LM-soda},
and $S_{2,2,2}$-free sub-cubic graphs \cite{LMR}; and see  \cite[Table 1]{GHL} for several other subclasses. See Figure~\ref{sg} for some of the special graphs used in this
paper.

\begin{figure}[t]
\centering
  \includegraphics{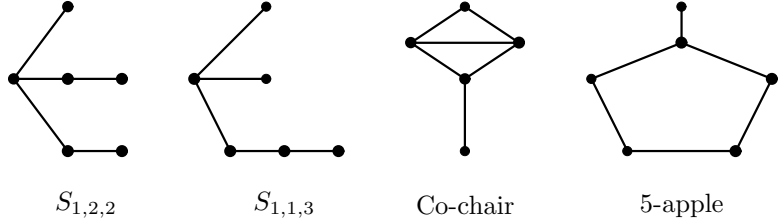}
\caption{Some special graphs.}
\label{sg}
\end{figure}

In this paper,  we show  that the MWIS problem can be efficiently solved in the class of ($S_{1, 2,
2}$, $S_{1, 1,3}$, co-chair)-free graphs by analyzing the structure  of the subclasses of this class of graphs.  This result extends some known results in
the literature such as: the aforementioned results for $P_4$-free graphs, and ($P_5$, co-chair)-free graphs \cite{K1}. A preliminary version (extended abstract)of this paper is appearing in \cite{K3}.

\section{Notation and terminology}\label{NT}

Let $G$ be a finite, undirected and simple graph with vertex-set
$V(G)$ and edge-set $E(G)$.  We let $|V(G)| = n$ and $|E(G)| = m$.  For a
vertex $v \in V(G)$, the {\it neighborhood} $N(v)$ of $v$ is the set
$\{u \in V(G) \mid uv \in E(G)\}$, and the {\it closed neighborhood}
$N[v]$ is the set $N(v) \cup \{v\}$.  The neighborhood $N(X)$ of a
subset $X \subseteq V(G)$ is the set $\{u \in V(G)\setminus X : u$ $
\mbox{~is adjacent to a vertex of }X\}$.  Given a subgraph $H$ of $G$
and $v \in V(G)\setminus V(H)$, let $N_H(v)$ denote the set $N(v) \cap
V(H)$, and for $X \subseteq V(G)\setminus V(H)$, let $N_H(X)$ denote
the set $N(X) \cap V(H)$.  For any two subsets $S$, $T\subseteq V(G)$,
we say that $S$ is \emph{complete} to $T$ if every vertex in $S$ is
adjacent to every vertex in $T$.

 A \emph{hole} is a chordless cycle $C_k$, where $k \geq 5$. An \emph{odd hole} is a hole $C_{2k+1}$, where $k \geq 2$.

 The
\emph{k-apple} is the graph obtained from a chordless cycle $C_k$ of
length $k\geq 4$ by adding a vertex that has exactly one neighbor on
the cycle.

 The \emph{diamond} is the graph $K_4 -e$ with  vertex-set  $\{v_1, v_2, v_3, v_4\}$ and edge-set $\{v_1v_2, v_2v_3, v_3v_4, $ $v_4v_1, v_1v_3\}$.

The \emph{co-chair} is the graph with   vertex-set $\{v_1, v_2, v_3, v_4, v_5\}$ and edge-set
$\{v_1v_2, $ $ v_2v_3, v_3v_4, $ $ v_4v_1, v_1v_3, v_4v_5\}$; it is the complement graph of the  \emph{chair/fork}  graph (see Figure~\ref{sg}).

A vertex $z \in V(G)$ {\it distinguishes} two other vertices $x, y \in
V(G)$ if $z$ is adjacent to one of them and nonadjacent to the other.
A set $M\subseteq V(G)$ is a {\it module} in $G$ if no vertex from
$V(G) \setminus M$ distinguishes two vertices from $M$.  The {\it
trivial modules} in $G$ are $V(G)$, $\emptyset$, and all one-vertex
sets.  A graph $G$ is {\it prime} if it contains only trivial modules.
Note that prime graphs on at least three vertices are connected.

A class of graphs $\cal{G}$ is {\it hereditary} if every induced
subgraph of a member of $\cal{G}$ is also in $\cal{G}$.  We will use
the following theorem by L\"ozin and Milani\v{c} \cite{LM}.

\begin{theorem}[\cite{LM}]\label{thm:LM}
Let $\cal{G}$ be a hereditary class of graphs.  If there is a constant
$p \geq 1$ such that the MWIS problem can be solved in time
$O(|V(G)|^p)$ for every prime graph $G$ in $\cal{G}$, then the MWIS
problem can be solved in time $O(|V(G)|^p + |E(G)|)$ for every graph
$G$ in $\cal{G}$. \hfill $\Box$
\end{theorem}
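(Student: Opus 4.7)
The plan is to establish Theorem~\ref{thm:LM} via a bottom-up dynamic program on the \emph{modular decomposition tree} of $G$. The first step is to compute, in linear time $O(n+m)$ (using an algorithm of, e.g., McConnell--Spinrad or Cournier--Habib), the modular decomposition tree $T(G)$. Recall that each internal node of $T(G)$ is labeled \emph{parallel} (its children correspond to the connected components of the induced subgraph, equivalently a disjoint union of strong modules), \emph{series} (a join of strong modules), or \emph{prime} (the \emph{quotient} obtained by contracting each child-module to a single vertex is a prime graph on at least four vertices). The leaves correspond to the vertices of $G$ and carry the weights $w(v)$.

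Next I would run the following recursion to compute, for every node $t$ of $T(G)$, the MWIS value $\alpha(t)$ of the induced subgraph $G[M_t]$, where $M_t$ is the strong module represented by $t$. For a leaf, $\alpha(t) = w(v)$. If $t$ is a parallel node with children $t_1,\dots,t_k$, set $\alpha(t) = \sum_{i=1}^k \alpha(t_i)$. If $t$ is a series node, set $\alpha(t) = \max_{1\le i\le k} \alpha(t_i)$. If $t$ is a prime node, form the quotient graph $Q_t$ (whose vertices are the children $t_1,\dots,t_k$ with adjacency inherited from $G$), assign each vertex $t_i$ the weight $\alpha(t_i)$ computed recursively, and set $\alpha(t)$ equal to the MWIS value of $(Q_t, \alpha)$. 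Correctness is standard: because $M_{t_i}$ is a module, any MWIS of $G[M_t]$ intersects each $M_{t_i}$ in an independent set, and for a prime node the choice of which modules to ``activate'' corresponds to an independent set of $Q_t$, which we may assume has weight $\alpha(t_i)$ for each chosen $i$.

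The key point for applying the hypothesis is that every quotient $Q_t$ at a prime node lies in $\mathcal{G}$. To see this, pick an arbitrary representative vertex from each module $M_{t_i}$; since $M_{t_i}$ is a module, the induced subgraph of $G$ on the chosen representatives is isomorphic to $Q_t$, and hereditariness of $\mathcal{G}$ yields $Q_t \in \mathcal{G}$. Hence the assumed algorithm solves the MWIS problem on $(Q_t, \alpha)$ in time $O(|V(Q_t)|^p)$.

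Finally, I would bound the total running time. The parallel and series contributions sum to $O(n)$ across the whole tree, since each internal node contributes linearly in its number of children and the total number of children in $T(G)$ is $O(n)$. For the prime nodes, if $k_1, k_2, \dots$ denote the numbers of children of the prime nodes, then $\sum_j k_j \le n$, and since $p \ge 1$ the function $x \mapsto x^p$ is superadditive on nonnegative integers, giving $\sum_j k_j^p \le \bigl(\sum_j k_j\bigr)^p \le n^p$. Adding the $O(n+m)$ preprocessing yields the claimed $O(n^p + m)$ bound. The only subtle step, and the one I would take most care with, is the superadditivity bound together with verifying that hereditariness really does guarantee $Q_t \in \mathcal{G}$ for every prime node; everything else is a straightforward divide-and-conquer on the modular decomposition.
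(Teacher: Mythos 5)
Your proposal is correct and follows essentially the same route as the source of this theorem: the paper itself states the result as a citation to L\"ozin and Milani\v{c} \cite{LM} without reproducing a proof, and their argument is precisely the bottom-up dynamic program over the modular decomposition tree that you describe, with quotients at prime nodes handled by the assumed algorithm (valid since a prime quotient is isomorphic to an induced subgraph on module representatives, hence lies in the hereditary class) and the running time bounded via superadditivity of $x^p$ for $p\geq 1$. The only nitpick is that the sum of the numbers of children over prime nodes is bounded by the number of non-root tree nodes, i.e.\ $O(n)$ rather than literally $n$, which does not affect the $O(n^p+m)$ bound.
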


Let $\cal{C}$ be a class of graphs.  A graph $G$ is {\it nearly
$\cal{C}$} if for every vertex $v$ in $V(G)$ the graph induced by
$V(G) \setminus N[v]$ is in $\cal{C}$.  Let $\alpha_w(G)$ denote the
weighted independence number of $G$.  Obviously, we have:
\begin{eqnarray}
\alpha_w(G) & = & \max\{w(v) + \alpha_w(G\setminus N[v]) \mid v \in
V(G)\}.
\end{eqnarray}
Thus, whenever MWIS is solvable in time $T$ on a class $\cal{C}$, then
it is solvable on nearly $\cal{C}$ graphs in time $n\cdot T$.

A {\it clique} in $G$ is a subset of pairwise adjacent vertices in
$G$.  A {\it clique separator} (or {\it clique cutset}) in a connected
graph $G$ is a subset $Q$ of vertices in $G$ which induces a complete
graph, such that the graph induced by $V(G) \setminus Q$ is disconnected.  A
graph is an {\it atom} if it does not contain a clique separator.

We will also use the following theorem given in \cite{KM}.

\begin{theorem}[\cite{KM}]\label{thm:Tar}
Let $\cal C$ be a class of graphs such that MWIS can be solved in time
$O(f(n))$ for every graph in $\cal C$ with $n$ vertices.  Then in any hereditary
class of graphs whose all atoms are nearly $\cal C$ the MWIS problem
can be solved in time $O(n^2\cdot f(n))$.  \hfill $\Box$
\end{theorem}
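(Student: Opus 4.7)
The plan is to combine Tarjan's clique separator decomposition of $G$ with the near-$\cal C$ observation already encoded in equation~(1). First I would invoke Tarjan's decomposition algorithm, which in polynomial time produces a decomposition tree of $G$ whose leaves are the atoms $A_1,\dots,A_r$ with $r\leq n$. By hypothesis every $A_i$ is nearly $\cal C$, so for each $v\in V(A_i)$ the graph $A_i\setminus N[v]$ lies in $\cal C$ and admits an MWIS computation in time $O(f(|A_i|))$. Applying~(1) to $A_i$ therefore yields $\alpha_w(A_i)$ in time $O(|A_i|\cdot f(|A_i|))$.

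Second, I would process the decomposition tree bottom-up to assemble $\alpha_w(G)$. For a leaf atom $A$ attached to the remainder of $G$ by a clique cutset $Q$, the key quantities are the conditional values $\alpha_w^{q}(A)$ for each $q\in Q$ (maximum weight of an independent set of $A$ that is forced to contain $q$) and $\alpha_w^{\emptyset}(A)$ (no vertex of $Q$ used). Since $Q$ is a clique, any independent set meets $Q$ in at most one vertex, so these $|Q|+1$ scalars completely summarise the effect of $A$ on the rest. One has $\alpha_w^{q}(A)=w(q)+\alpha_w(A\setminus N_A[q])$, and the nearly-$\cal C$ property of $A$ places $A\setminus N_A[q]$ in $\cal C$, costing a single $O(f(n))$ call. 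The value $\alpha_w^{\emptyset}(A)$ is obtained the same way via equation~(1) applied to the induced subgraph $A\setminus Q$, which remains in the hereditary class and whose atoms are still nearly $\cal C$. These $|Q|+1$ numbers are then absorbed into a modified weight function on $Q$ in the standard Tarjan fashion, and $A\setminus Q$ is deleted from $G$; the procedure is repeated on the smaller graph until only a single atom remains.

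Third, I would bound the running time. The decomposition produces at most $n$ atoms; processing one atom $A$ costs $O(|A|\cdot f(|A|))$ for its own MWIS via~(1), plus $O(|Q|\cdot f(n))$ for the conditional evaluations on its separating clique. Summing over the $O(n)$ atoms and using the loose bound $|A|,|Q|\leq n$ gives the claimed total $O(n^2\cdot f(n))$.

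The main obstacle is the bookkeeping in the combination phase: one has to verify that after contracting a leaf atom into modified weights on $Q$, the resulting smaller graph is still a member of the same hereditary class (so that its atoms remain nearly $\cal C$) and that the rewriting of weights faithfully preserves $\alpha_w(G)$. Hereditariness handles the first point since only vertex deletions are used; the second is the classical Tarjan-style identity, justified by the fact that an independent set of $G$ decomposes uniquely across the clique cutset $Q$ according to which (at most one) vertex of $Q$ it picks. Everything else is routine accounting, and the $n^2\cdot f(n)$ bound falls out immediately.
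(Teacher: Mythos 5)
A preliminary remark on the comparison: the paper itself does not prove this theorem at all --- it is quoted from \cite{KM} and used as a black box --- so your argument can only be measured against the standard proof it alludes to, and that is indeed the route you take: Tarjan's clique-separator decomposition into at most $n$ atoms, the observation that each leaf of the decomposition is an atom of a graph in the hereditary class and hence nearly $\mathcal{C}$, equation~(1) to solve MWIS on a nearly-$\mathcal{C}$ graph by $n$ calls to the $\mathcal{C}$-solver, and the usual clique-cutset bookkeeping with the $|Q|+1$ conditional values (correct, since an independent set meets the clique $Q$ in at most one vertex, and only vertex deletions are performed, so the residual graph stays in the hereditary class). The $O(n^{2}\cdot f(n))$ accounting is also fine.

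The one step that does not hold up as you justify it is the computation of $\alpha_w^{\emptyset}(A)=\alpha_w(A\setminus Q)$. You say it is ``obtained the same way via equation~(1) applied to $A\setminus Q$, which \ldots remains in the hereditary class and whose atoms are still nearly $\mathcal{C}$.'' Equation~(1) applied to $A\setminus Q$ requires solving MWIS on the graphs $(A\setminus Q)\setminus N[v]$; the nearly-$\mathcal{C}$ property of $A$ only places $A\setminus N[v]$ in $\mathcal{C}$, and $(A\setminus Q)\setminus N[v]$ is a proper induced subgraph of that, which lies in $\mathcal{C}$ only if $\mathcal{C}$ is hereditary --- an assumption the statement does not make. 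Your fallback (``whose atoms are still nearly $\mathcal{C}$'') amounts to a recursive invocation of the very theorem being proved on $A\setminus Q$; that recursion spawns its own decomposition and further recursive calls, and its cost appears nowhere in your $O(n^{2}\cdot f(n))$ budget (taken literally it cascades). The clean repair is not to delete $Q$ at all: run equation~(1) on $A$ itself with the weights of all vertices of $Q$ set to $0$. The optimum value of this weighted problem equals $\alpha_w(A\setminus Q)$, and every call made by equation~(1) is on $A\setminus N[v]\in\mathcal{C}$, merely with modified weights, so the cost stays $O(|A|\cdot f(n))$ per atom and the claimed bound survives. (Alternatively, in every application in this paper $\mathcal{C}$ is defined by forbidden induced subgraphs, hence hereditary, and your original computation goes through.) With that patch your proof matches the intended argument.
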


The following notation will be used several times in the proofs.
Given a graph $G$, let $v$ be a vertex in $G$ and $H$ be an induced
subgraph of $G\setminus\{v\}$ such that $v$ has no neighbor in $H$.
Let $t=|V(H)|$.  Then we define the following sets:
\begin{eqnarray*}
Q_{\ } &=& \mbox{the component of $G\setminus (V(H)\cup N(V(H)))$ that
contains $v$}, \\
A_i &=& \{x \in V(G) \setminus V(H) \mid |N_H(x)| = i\} \ (1 \leq i
\leq t), \\
A_i^+ &=& \{x \in A_i \mid N(x) \cap Q \neq \emptyset\}, \\
A^-_i &=& \{x \in A_i \mid N(x) \cap Q = \emptyset\}, \\
A^+ &=&   A^+_1\cup\cdots\cup A^+_t \ \mbox{and} \
A^- =  A^-_1\cup\cdots\cup A^-_t.
\end{eqnarray*}
So, $N(H) = A^+ \cup A^-$.  Note that, by the definition of $Q$ and
$A^+$, we have $A^+ = N(Q)$.  Hence $A^+$ is a separator between $H$
and $Q$ in $G$.

\section{Preliminary lemmas}
\begin{figure}
\centering
    \includegraphics{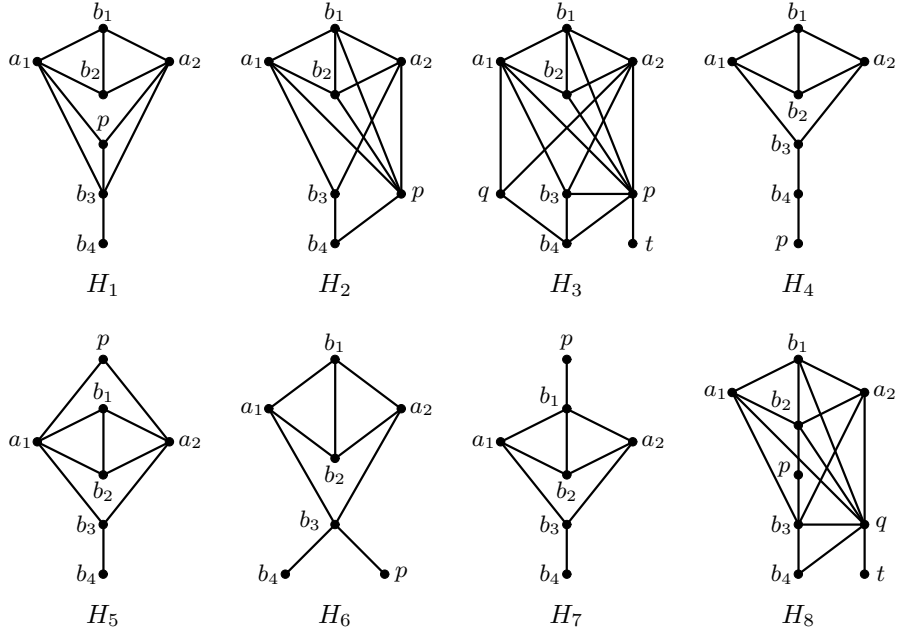}
\caption{Graphs $H_i$,  $i \in \{1, 2, \ldots, 8\}$ used in Lemma~\ref{primeSSCfree-forb}.}
\label{figure-lemmas}
\end{figure}

\begin{lemma}\label{primeSSCfree-forb}
Let $G = (V, E)$ be a prime  co-chair-free graph. Then $G$ is $(H_1, H_2, H_3)$-free.
Further, if $G$ is $S_{1, 2,2}$-free, then $G$ is $(H_4, H_5)$-free, and if $G$ is $S_{1, 1, 3}$-free, then $G$ is $(H_4, H_6, $ $H_7, H_8)$-free.
See Figure~\ref{figure-lemmas} for the graphs $H_i$,  $i \in \{1, 2, \ldots, 8\}$.
\end{lemma}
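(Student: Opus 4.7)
The plan is to proceed by contradiction for each of the eight forbidden subgraphs separately. Suppose $G$ contains some $H_i$ as an induced subgraph on a vertex set $V_i$. As drawn in Figure~\ref{figure-lemmas}, every $H_i$ admits an obvious local symmetry, so one can pick a pair $\{a,b\}\subseteq V_i$ with $N_{H_i}(a)=N_{H_i}(b)$. Then $\{a,b\}$ is a (two-element) module of the induced subgraph $G[V_i]$, and since $G$ is prime it cannot remain a module of $G$; hence some vertex $z\in V(G)\setminus V_i$ distinguishes $a$ from $b$, and up to the symmetry exchanging $a$ and $b$ we may assume $z\in N(a)\setminus N(b)$.

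For each $H_i$ we then scrutinise the induced subgraph $G[V_i\cup\{z\}]$: the neighbourhood $N_{H_i}(z)$ contains $a$ and avoids $b$, and we enumerate the possible completions of $N_{H_i}(z)$ among the remaining $|V_i|-2$ vertices of $V_i$. In every case we exhibit one of the claimed forbidden subgraphs. For the first assertion it suffices to produce an induced co-chair on $\{z\}$ together with four suitable vertices of $V_i$; for the $S_{1,2,2}$-free assertion we additionally search for an induced $S_{1,2,2}$ on $\{z\}\cup V_i$; and likewise for the $S_{1,1,3}$-free assertion. Two local observations are used repeatedly: the only way for a distinguishing vertex of the twin pair $\{a,b\}$ to avoid completing a co-chair on a $C_4$-subgraph of $H_i$ through $a,b$ is to attach in a very restricted way; and in the forbidden $S_{1,2,2}$ or $S_{1,1,3}$ search, $z$ together with an induced $P_3$ or $P_4$ of $H_i$ missing $b$ naturally yields the required subdivided claw.

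The automorphisms of each $H_i$ fixing $\{a,b\}$ setwise sharply reduce the number of genuinely distinct attachment patterns for $z$. Several patterns are also killed immediately because $\{z\}\cup V_i$ already contains an $H_j$ with $j<i$ whose absence has been established in an earlier case, allowing an inductive trimming of the case list. The main obstacle is the bookkeeping: each $H_i$ has a handful of attachment patterns, and for every one of them we must identify the concrete induced copy of a co-chair, an $S_{1,2,2}$, or an $S_{1,1,3}$ that it creates. Once these case-by-case matches are in place, the primeness of $G$ yields the required contradiction in every case, so $G$ is $H_i$-free throughout the stated range.
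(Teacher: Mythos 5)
Your strategy coincides with the paper's: the pair $\{a_1,a_2\}$ is a twin pair inside $H_i$, primeness yields an outside vertex $x$ adjacent to exactly one of them, and one then pins down how $x$ attaches to the rest of $H_i$ until a forbidden co-chair, $S_{1,2,2}$ or $S_{1,1,3}$ appears. But your write-up stops exactly where the mathematical content of this lemma begins. The entire substance of the proof is the concrete case analysis: one must first show that $x$ is adjacent to exactly one of $b_1,b_2$ (ruling out ``both'' takes a chain of three co-chair arguments), then force $xb_3,xb_4\notin E$, and then, for each $i$, run a further sequence of forced adjacencies and non-adjacencies involving the extra vertices $p$, $q$, $t$ before the contradiction materialises; moreover for $i=7,8$ the subcases $xb_1\in E$ and $xb_2\in E$ are \emph{not} symmetric and need separate treatment, so the symmetry reduction you rely on does not apply uniformly. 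You assert ``in every case we exhibit one of the claimed forbidden subgraphs'' without exhibiting a single one, so what you have is a plan rather than a proof: none of the dozens of concrete verifications that make up the argument is present, and nothing in your text lets a reader check that every attachment pattern of $z$ really is contradictory.

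A further caution about your proposed shortcut: discarding an attachment pattern because $\{z\}\cup V_i$ already contains some $H_j$ with $j<i$ is legitimate only when $H_j$-freeness was established under hypotheses implied by the current ones. The lemma's three assertions have different hypotheses (co-chair-free alone for $H_1,H_2,H_3$; additionally $S_{1,2,2}$-free for $H_4,H_5$; additionally $S_{1,1,3}$-free for $H_4,H_6,H_7,H_8$), so, for instance, $H_5$-freeness may not be invoked while proving the $S_{1,1,3}$ statements, and conversely. The paper sidesteps this entirely by arguing each case directly from co-chair-, $S_{1,2,2}$- or $S_{1,1,3}$-freeness as permitted; if you want the inductive trimming, you must make the hypothesis bookkeeping explicit, which your proposal does not do.
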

\begin{proof} Suppose to the contrary that $G$ contains an induced $H_i$, for some $i \in \{1, 2, \ldots, 8\}$ (as shown in Figure~\ref{figure-lemmas}). Since $G$ is prime, $\{a_1, a_2\}$ is not a
module in $G$, so there exists a vertex $x \in V \setminus
V(H_i)$ such that (up to symmetry) $xa_1 \in E$ and $xa_2 \notin
E(G)$.  Then since $\{x, a_1, b_1, b_2, a_2\}$ does not induce a co-chair in $G$,  $x$ is adjacent to one of $b_1, b_2$.

Suppose that $x$ is adjacent to both of $b_1, b_2$.  Then since $\{x, b_1, b_2, a_2, b_3\}$ does not induce a co-chair in $G$, $xb_3 \in E$, and since
 $\{x, b_1, b_2, a_2, b_4\}$ does not induce a co-chair in $G$, $xb_4 \notin E$. But, now $\{x, a_1, b_1, b_3, b_4\}$ induces a co-chair in $G$, which is a contradiction.  Therefore, $x$ is adjacent to exactly one of $b_1, b_2$.

 Suppose that $i \neq 7, 8$. We may assume (up to symmetry) that $xb_1 \in E$ and
$xb_2 \notin E$.  Since $\{x, a_1,  b_1, b_2, b_4\}$ does not induce a co-chair in $G$, $xb_4 \notin E$, and then since $\{x, a_1, b_1, b_3, b_4\}$ does not induce a co-chair in $G$, $xb_3 \notin E$. Now, we prove a contradiction as follows:

\no{$i = 1$}: Since $\{x, a_1, a_2, b_3, p\}$ does not induce a co-chair in $G$, $xp \in E$. But, now $\{x, a_1, b_3, b_4, p\}$ induces a co-chair in $G$, which is a contradiction.  Thus, $G$ is $H_1$-free.

 \no{$i = 2$}: Since $\{x, a_1, b_1, p, b_4\}$ does not induce a co-chair in $G$, $xp \in E$. But, now $\{x, b_1, p, a_2, b_3\}$ induces a co-chair in $G$, which is a contradiction.  Thus, $G$ is $H_2$-free.

\no{$i = 3$}:
 Since $\{x, a_1,  b_1, b_2,t\}$ does not induce a co-chair in $G$, $xt \notin E$, and then since $\{x, a_1,  b_1,  p, t\}$ does not induce a co-chair in $G$, $xp \in E$. Then since $\{x, b_1, p, a_2, q\}$ does not induce a co-chair in $G$, $xq \in E$. But, now $\{b_1, x, a_1, q, b_4\}$ induces a co-chair in $G$, which is a contradiction.  Thus, $G$ is $H_3$-free.

 \no{$i = 4$}: Since  $\{x, a_1,  b_1, b_2, p\}$ does not induce a co-chair in $G$,  $xp \notin E$. But, now $\{x, a_1, b_3, b_4, p,$ $ a_2\}$ induces an
 $S_{1, 2,2}$ in $G$ or $\{x, a_1, b_3, b_4, p, b_2\}$ induces an $S_{1, 1, 3}$ in $G$, a contradiction.  Thus, $G$ is $H_4$-free.

 \no{$i = 5$}:  Since $\{x, b_1, a_2, b_3, b_4, p\}$ does not induce an $S_{1, 2, 2}$ in $G$, $xp \in E$. But, now $\{x, p, a_2, b_3,$ $ b_4, b_2\}$ induces an $S_{1, 2, 2}$ in $G$, which is a contradiction. Thus, $G$ is $H_5$-free.

\no{$i = 6$}: Since $\{x, a_1,  b_1, b_2, p\}$ does not induce a co-chair in $G$, $xp \notin E$. But, now $\{x, b_1, a_2, b_3,$ $ b_4, p\}$ induces an $S_{1, 1, 3}$ in $G$, which is a contradiction.  Thus, $G$ is $H_6$-free.

Suppose that $i = 7$. Note that $x$ is adjacent to exactly one of $b_1, b_2$. Then as earlier $xb_3, xb_4 \notin E$ (otherwise, $G$ induces a co-chair).  Now, if $xb_1 \in E$ and $xb_2 \notin E$, then since $\{x, b_1, a_2, b_3, b_4, p\}$ does not induce an $S_{1, 1, 3}$ in $G$, $xp \in E$. But, now $\{p, x, b_1, a_1, b_3\}$ induces a co-chair in $G$, which is a contradiction. Next, if $xb_2 \in E$ and $xb_1 \notin E$,  then since $\{x, a_1,  b_2, b_1, p\}$ does not induce a co-chair in $G$, $xp \in E$.  But, now  $\{b_4, b_3, a_1, x, p, a_2\}$ induces an $S_{1, 1, 3}$ in $G$, which is a contradiction.
Thus, $G$ is $H_7$-free.

Suppose that $i = 8$.  Again as earlier $xb_3, xb_4 \notin E$ (otherwise, $G$ induces a co-chair).  Now, if $xb_1 \in E$ and $xb_2 \notin E$, then since $\{x, a_1,  b_1, b_2, p\}$ does not induce a co-chair in $G$, $xp \in E$. Also, since $\{x, a_1, b_1, b_2, t\}$ does not induce a co-chair in $G$, $xt \notin E$, and then since $\{x, a_1, b_1, q, t\}$ does not induce a co-chair in $G$,
 $xq \in E$. But, now $\{a_2, b_1, q, x, p\}$ induces a co-chair in $G$, which is a contradiction.
Next, if $xb_2 \in E$ and $xb_1 \notin E$, then since $\{x, b_2, p, b_3, b_4, b_1\}$ does not induce an $S_{1, 1, 3}$ in $G$, $xp \in E$. Also, since $\{x, a_1, b_1, b_2, t\}$ does not induce a co-chair in $G$, $xt \notin E$, and then since $\{x, a_1,  b_2, q, t\}$ does not induce a co-chair in $G$, $xq \in E$. But, now $\{p, x, b_2, q, t\}$ induces a co-chair in $G$, which is a contradiction.
Thus, $G$ is $H_8$-free.

This completes the proof of Lemma~\ref{primeSSCfree-forb}.
\end{proof}

\begin{figure}[t]
\centering
  \includegraphics{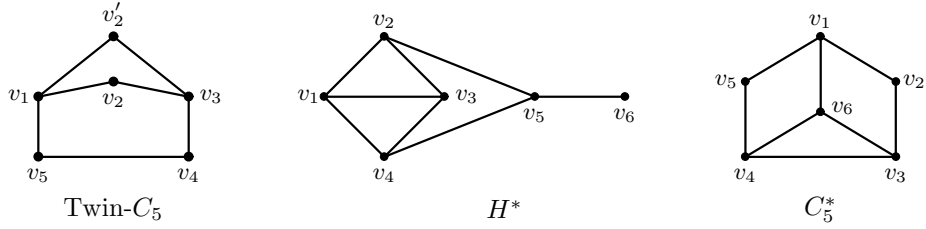}
\caption{Graphs twin-$C_5$,  $H^*$ and $C_5^*$.}
\label{HC}
\end{figure}

\begin{lemma} \label{5apple-C5*-diamond-free-implies-twin-C5-free}
If $G = (V, E)$ is a prime ($5$-apple, $C_5^*$, diamond)-free graph, then $G$ is twin-$C_5$-free.
\end{lemma}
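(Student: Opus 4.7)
The plan is to argue by contradiction. Suppose $G$ contains an induced twin-$C_5$ on a vertex set $W = \{v_1, v_2, v_3, v_4, v_5, v_1'\}$, where $v_1 v_2 v_3 v_4 v_5 v_1$ is the $5$-cycle and $v_1'$ is the false twin of $v_1$, so that $N(v_1') \cap W = \{v_2, v_5\}$ and $v_1 v_1' \notin E$. Since $G$ is prime, the pair $\{v_1, v_1'\}$ is not a module, so there is a vertex $x \in V \setminus W$ distinguishing them, and by the involution of the twin-$C_5$ that swaps $v_1$ with $v_1'$, I may assume $xv_1 \in E$ and $xv_1' \notin E$. The rest of the argument is a case analysis on $S := N(x) \cap \{v_2, v_3, v_4, v_5\}$, driven by the \emph{twin substitution}: because $v_1$ and $v_1'$ share the same neighbours on the cycle, $\{v_1', v_2, v_3, v_4, v_5\}$ also induces a $C_5$.

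The two easy regimes are $|S| \leq 1$ (handled by the $5$-apple exclusion) and most of $|S| \geq 2$ (handled by the diamond exclusion). If $|S| = 0$, then $\{v_1, v_2, v_3, v_4, v_5, x\}$ is a $5$-apple with pendant $x$ at $v_1$; if $|S| = 1$, then by the twin substitution $\{v_1', v_2, v_3, v_4, v_5, x\}$ is a $5$-apple with $x$ pendant at the unique vertex of $S$. On the diamond side, whenever $S$ contains any of the three pairs $\{v_2, v_3\}$, $\{v_4, v_5\}$, or $\{v_2, v_5\}$, the four-vertex set formed by $\{x, v_1\}$ together with that pair induces a diamond (five edges, with the single missing edge being $v_1 v_3$, $v_1 v_4$, or $v_2 v_5$ respectively). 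A quick inspection shows that every $S$ of size at least three already contains one of these three pairs, so, modulo the reflection $v_2 \leftrightarrow v_5,\ v_3 \leftrightarrow v_4$ of the twin-$C_5$, only two residual configurations remain: $S = \{v_3, v_4\}$ and $S = \{v_2, v_4\}$.

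These two residual cases are the anticipated main obstacle. In each I would apply the twin substitution once more and examine $\{v_1', v_2, v_3, v_4, v_5, x\}$: this six-vertex set induces a $C_5$ with $x$ attached by exactly two edges --- to consecutive cycle vertices when $S = \{v_3, v_4\}$, and to cycle vertices at cyclic distance two when $S = \{v_2, v_4\}$. The plan is to identify each such configuration as an induced copy of the forbidden graph $C_5^*$ of Figure~\ref{HC}, yielding the final contradiction. The hardest part will be this identification against the precise structure of $C_5^*$; if one of the two attachment patterns does not match $C_5^*$ on the nose, the fallback is to keep both $v_1$ and $v_1'$ in play and hunt for a diamond or a $5$-apple on an alternative four- or six-vertex subset (for instance, in the $S = \{v_2, v_4\}$ case the set $\{x, v_2, v_3, v_4\}$ already induces a $C_4$, a natural seed for extending to an induced $C_5^*$ using $v_1'$ or $v_5$).
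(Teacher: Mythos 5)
Your setup and the bulk of the case analysis are fine: the choice of a distinguishing vertex $x$ with $xv_1\in E$, $xv_1'\notin E$, the $5$-apple arguments for $|S|\le 1$, the three diamond pairs, and the reduction (up to the reflection) to $S=\{v_3,v_4\}$ and $S=\{v_2,v_4\}$ are all correct. The gap is exactly where you flagged it, and the step you propose there would fail: the graph $C_5^*$ of Figure~\ref{HC} is a $C_5$ together with a vertex having \emph{three} neighbours on the cycle, namely two consecutive vertices $v_i,v_{i+1}$ and the vertex $v_{i+3}$ at distance two from both (this is how $C_5^*$ is used throughout the paper, e.g.\ in the proof of Theorem~\ref{SSD5-apple-freeimpliesc5*-free}). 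Your plan is to exhibit $C_5^*$ inside $\{v_1',v_2,v_3,v_4,v_5,x\}$, but on that substituted cycle $x$ has only \emph{two} neighbours ($v_3,v_4$, respectively $v_2,v_4$), since the substitution discards precisely the attachment $xv_1$; neither configuration is a $C_5^*$, nor a diamond or $5$-apple, and the fallback ("hunt for a diamond or $5$-apple", "extend the $C_4$") is left unexecuted. So in the only cases where the $C_5^*$-hypothesis must do work, no contradiction is actually derived.

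The fix is immediate, and it is essentially what the paper does: stay on the \emph{original} cycle. In both residual cases (and in the symmetric case $S=\{v_3,v_5\}$) the set $\{v_1,v_2,v_3,v_4,v_5,x\}$ itself induces a $C_5^*$, because $N(x)\cap\{v_1,\dots,v_5\}=\{v_1\}\cup S$ is exactly the required pattern of two consecutive vertices plus the vertex opposite that edge: $\{v_3,v_4,v_1\}$ for $S=\{v_3,v_4\}$ and $\{v_1,v_2,v_4\}$ for $S=\{v_2,v_4\}$. With that one replacement your argument closes and is equivalent in spirit to the paper's proof, which reaches the same final configuration by forcing the adjacencies of $x$ one at a time (diamond-freeness, then $5$-apple-freeness) and then exhibiting the $C_5^*$ on the five-cycle through the twin.
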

\begin{proof}
Suppose to the contrary that $G$ contains an induced twin-$C_5$, say $H$ as shown in
Figure~\ref{HC}.  Since $G$ is prime, $\{v_2, v_2'\}$ is not a
module in $G$, so there exists a vertex $x$ in $V \setminus
V(H)$ such that (up to symmetry) $xv_2' \in E$ and $xv_2 \notin
E$.  Then since $\{v_2', v_1, v_3, v_4, v_5, x\}$ does not induce a 5-apple in $G$, $xv_i \in E$, for some $i \in \{1, 3, 4, 5\}$.
If $xv_1 \in E$, then since $G$ is diamond-free, $xv_3, xv_5 \notin E$. Then since $\{v_1, v_2, v_3, v_4, v_5, x\}$ does not induce a 5-apple in $G$, $xv_4 \in E$, but then $\{v_1, v_2', v_3, v_4, v_5, x\}$ induces a $C_5^*$ in $G$, which is a contradiction. A similar contradiction arises if we assume $xv_3 \in E$. So, we may assume that $xv_1, xv_3 \notin E$. Then since $G$ is 5-apple-free, $xv_4\in E$ and $xv_5 \in E$. Now, $\{v_1, v_2', v_3, v_4, v_5, x\}$ induces a $C_5^*$ in $G$, which is a contradiction. So, $G$ is twin-$C_5$-free.
\end{proof}

\begin{lemma}[\cite{K1}]\label{GC-freeimpliesD-free}
If $G = (V, E)$ is a prime  (co-chair, gem)-free graph, then $G$ is diamond-free.
\end{lemma}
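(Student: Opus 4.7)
My plan is to argue by contradiction: assume $G$ is prime, (co-chair, gem)-free, and contains an induced diamond $D=\{v_1,v_2,v_3,v_4\}$ with $v_2v_4\notin E$ and all other pairs adjacent. Because $v_2$ and $v_4$ share the common $D$-neighborhood $\{v_1,v_3\}$, the pair $\{v_2,v_4\}$ is a module of $D$. By primeness of $G$ this pair is not a module of $G$, so there is a vertex $x\in V(G)\setminus V(D)$ with, say, $xv_2\in E$ and $xv_4\notin E$. I would then split on $|N(x)\cap\{v_1,v_3\}|$: if $x$ is adjacent to neither $v_1$ nor $v_3$, then $\{v_1,v_2,v_3,v_4,x\}$ induces a co-chair ($D$ with a pendant $x$ attached at the tip $v_2$); if it is adjacent to exactly one, say $v_1$, then in the same $5$-subset the vertex $v_1$ is universal and the other four vertices induce the path $x$--$v_2$--$v_3$--$v_4$, so the set induces a gem; both give contradictions with the forbidden family.

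The main obstacle is the remaining case $xv_1,xv_3\in E$, which produces no immediate forbidden subgraph but creates a true-twin pair $\{v_2,x\}$ in $G[\{v_1,v_2,v_3,v_4,x\}]$ (same outside neighbors $\{v_1,v_3\}$, both missing $v_4$, mutually adjacent) and a new induced diamond $\{x,v_1,v_3,v_4\}$ with tips $\{x,v_4\}$. To handle it I would apply primeness again on $\{v_2,x\}$, obtain a distinguisher $y$ (WLOG $yv_2\in E, yx\notin E$), and sub-case on whether $yv_4\in E$. If $yv_4\notin E$, then $y$ distinguishes $\{v_2,v_4\}$ as well and must itself satisfy one of the three cases on $x$ above, the first two of which immediately close. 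If $yv_4\in E$, enumerating the four possibilities for $(yv_1,yv_3)$ always exhibits a gem or co-chair inside $\{v_1,v_2,v_3,v_4,x,y\}$: for example $\{v_3,v_2,x,y,v_4\}$ is a gem (with $v_3$ universal and $P_4$ equal to $x$--$v_2$--$y$--$v_4$) when $y$ is adjacent to every $v_i$, and $\{x,y,v_1,v_3,v_4\}$ is a co-chair (diamond on $\{v_1,v_3,x,v_4\}$ with pendant $y$ at the tip $v_4$) when $y$ is adjacent to $v_2,v_4$ only among $D$.

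The sole scenario surviving both applications of primeness is the one in which $y$ also lies in the third case, making $\{x,y\}$ a false-twin pair with common neighborhood $\{v_1,v_2,v_3\}$ and $xy\notin E$. I would close this by iterating the argument: invoke primeness on $\{x,y\}$ to produce yet another distinguisher $u$, then exhaust $u$'s adjacencies to $\{v_1,v_2,v_3,v_4\}$ under the constraint $ux\in E, uy\notin E$, and in every case other than $u$ being again a third-case distinguisher locate a gem or a co-chair on a $5$-subset of $\{v_1,v_2,v_3,v_4,x,y,u\}$. In the final exceptional case, the set of all vertices whose $D$-neighborhood equals $\{v_1,v_2,v_3\}$, together with the symmetrically defined set on the $v_4$-side, forms a non-trivial module of $G$: the preceding case analysis has already forced every vertex outside this set (and outside $\{v_1,v_3\}$) to be either complete or anti-complete to it, so primeness is violated. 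The main difficulty throughout is the bookkeeping in this iterated third case; the key insight is that each recursion step either closes the proof through an explicit forbidden five-vertex subgraph or enlarges a twin-cluster whose maximality must, by finiteness of $V(G)$, eventually contradict primeness.
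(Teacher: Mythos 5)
Your opening case analysis is correct and is indeed the crux of the matter: a vertex adjacent to exactly one of the two tips $v_2,v_4$ must see both centers $v_1,v_3$, since otherwise $\{v_1,v_2,v_3,v_4,x\}$ induces a co-chair (if it sees neither center) or a gem (if it sees exactly one). The gap is in how you finish. The chain of primeness applications on $\{v_2,x\}$, then $\{x,y\}$, then $\{x,y,u\},\dots$ never closes by itself, and the step you invoke to stop it --- that the set of vertices whose neighbourhood in $D$ is $\{v_1,v_2,v_3\}$, together with the symmetric set on the $v_4$-side, is a non-trivial module because ``the preceding case analysis has already forced every vertex outside this set to be complete or anticomplete to it'' --- is not something your analysis establishes, and as stated it is false. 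Your cases only constrain vertices that distinguish the particular pairs $\{v_2,v_4\}$, $\{v_2,x\}$, $\{x,y\}$; they say nothing about vertices adjacent to both of $v_2,v_4$, or to neither, nor about outside vertices distinguishing two vertices inside the candidate set. Concretely, if both sides are nonempty then $v_2$ itself is adjacent to every vertex of the $\{v_1,v_2,v_3\}$-side and to no vertex of the $\{v_1,v_3,v_4\}$-side, so the set you name is not a module unless $v_2,v_4$ are added; and even then a vertex $t$ adjacent to $v_1,v_3$ and to one member $x$ of the set, but to neither $v_2$ nor $v_4$, creates no gem or co-chair inside $\{t,x,v_1,v_2,v_3,v_4\}$ (check the four $5$-subsets containing $t$ and $x$), so nothing you have proved prevents such a $t$ from distinguishing members of your set. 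For the same reason the closing remark that a maximal twin-cluster must ``by finiteness'' contradict primeness is unsupported: maximality alone does not make the cluster a module.

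The repair is to run your first case analysis once, but for an arbitrary nonadjacent pair in $C:=N(v_1)\cap N(v_3)$ rather than only for $\{v_2,v_4\}$: for nonadjacent $p,q\in C$, a vertex $w\notin C\cup\{v_1,v_3\}$ adjacent to exactly one of $p,q$ gives the co-chair $\{v_1,v_3,p,q,w\}$ if $w$ misses both of $v_1,v_3$, and a gem (that center universal to the $P_4$ $w$--$p$--$v_3$--$q$, say) if $w$ sees exactly one of them. Hence no vertex outside $C\cup\{v_1,v_3\}$ distinguishes any nonadjacent pair of $C$, so it is complete or anticomplete to every anticomponent of $G[C]$; vertices of $C$ lying in other anticomponents, as well as $v_1$ and $v_3$, are complete to it. The anticomponent of $G[C]$ containing $v_2$ (hence also $v_4$) is therefore a module of size at least two not containing $v_1$, contradicting primeness. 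Note that the paper you were given states this lemma without proof, citing \cite{K1}; the one-step module construction above is the kind of argument that citation hides, and it is what should replace your open-ended distinguisher chase.
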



\section{($S_{1, 2, 2}$, $S_{1, 1, 3}$, diamond)-free graphs}

In this section, we show  that the MWIS problem can be efficiently solved in the class of ($S_{1, 2,
2}$, $S_{1, 1,3}$, diamond)-free graphs by analyzing the atomic structure of the subclasses of this class of graphs.

\subsection{($S_{1, 2, 2}$, $S_{1, 1, 3}$, diamond, 5-apple, $C_5^*$)-free graphs}
\begin{theorem} \label{SSDC5-free-odd-hole impliesclaw-free}
Let $G= (V, E)$ be a prime ($S_{1, 2, 2}$, $S_{1, 1, 3}$,  diamond, $5$-apple, $C_5^*$)-free graph. If $G$ contains an  odd hole $C_{2k+1}$ with $k \geq 2$, then $G$ is claw-free.
\end{theorem}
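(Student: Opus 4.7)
My plan is to argue by contradiction: assume that $G$ contains both an induced claw (with center $c$ and pairwise nonadjacent leaves $\ell_1, \ell_2, \ell_3$) and an odd hole, and let $C = v_1 v_2 \cdots v_{2k+1}$ be a \emph{shortest} odd hole in $G$. Since $G$ is prime and $(5\text{-apple}, C_5^*, \text{diamond})$-free, Lemma~\ref{5apple-C5*-diamond-free-implies-twin-C5-free} gives that $G$ is twin-$C_5$-free; together with the five forbidden subgraphs in the hypothesis and the minimality of $C$, this will be the toolkit from which all contradictions are drawn.

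As a preliminary reduction, I would show that no vertex $x \in V(G) \setminus V(C)$ has $|N_C(x)| = 1$. Indeed, if $N_C(x) = \{v_i\}$, then $\{v_{i-2}, v_{i-1}, v_i, v_{i+1}, v_{i+2}, x\}$ induces $S_{1,2,2}$ centered at $v_i$, with the length-$1$ arm $v_i x$ and the two length-$2$ arms along $C$, contradicting $S_{1,2,2}$-freeness. Consequently $|N_C(x)| = 0$ or $|N_C(x)| \geq 2$ for every vertex $x$ outside $C$.

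The main analysis splits into cases according to the position of the claw: \textbf{(A)} $c \in V(C)$, \textbf{(B)} $c \notin V(C)$ but some leaf lies on $C$, and \textbf{(C)} the claw is disjoint from $V(C)$, the last reducing to (A) or (B) via the connectedness of $G$. The crucial subcase of (A) has $c = v_i$, $\{\ell_1, \ell_2\} = \{v_{i-1}, v_{i+1}\}$, and $\ell_3 = x \notin V(C)$; then $v_i \in N_C(x)$, $v_{i\pm 1} \notin N_C(x)$, and $x$ has some further cycle-neighbor. For $|C| = 5$, the three possibilities for $N_C(x)$ are $\{v_i, v_{i+2}\}$, $\{v_i, v_{i-2}\}$, and $\{v_i, v_{i-2}, v_{i+2}\}$; the first two make $x$ a non-adjacent twin of $v_{i+1}$ or $v_{i-1}$, yielding a twin-$C_5$, while the third realises the two-consecutive-plus-one-non-adjacent attachment and induces $C_5^*$ on $\{v_1, \ldots, v_5, x\}$.

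For $|C| \geq 7$, I would employ four tools to cover every attachment pattern of $x$ to $C$: (i) a diamond whenever $N_C(x)$ contains three consecutive cycle vertices; (ii) an induced $C_5^*$ whenever a $5$-vertex arc of $C$ together with $x$ realises the two-consecutive-plus-one-non-adjacent pattern; (iii) the minimality of $C$ to forbid any strictly shorter induced odd cycle through $x$ (which arises whenever two cycle-neighbors of $x$ are at odd cyclic distance $\geq 3$); and (iv) an induced $S_{1,1,3}$ centered at $v_i$ with short arms $v_{i-1}$ and $x$ and the length-$3$ arm running along $C$ in a direction in which $x$ has no further cycle-neighbor (one checks that such a direction always exists in the remaining configurations). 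The subcases of (A) with one or zero leaves on $C$, and Cases (B) and (C), are handled analogously by applying the same four tools to each off-cycle claw-vertex in turn. The main obstacle is Case (A) for $|C| \geq 7$: one must verify that for every pattern of $N_C(x)$ --- both $|N_C(x)| = 2$ and $|N_C(x)| \geq 3$ --- at least one of the four tools applies, and choose the correct $5$-arc or direction of the length-$3$ arm so as to avoid a collision with the extra cycle-neighbors of $x$.
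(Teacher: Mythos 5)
Your overall strategy (fix a shortest odd hole $C$, invoke Lemma~\ref{5apple-C5*-diamond-free-implies-twin-C5-free} for twin-$C_5$-freeness, and destroy a hypothetical claw by cases on how it meets $C$) is the same as the paper's, but your plan leaves out the step that carries essentially all of the weight. The paper first proves Claim~\ref{nei-odd-hole}: \emph{every} vertex with a neighbor on $C$ sees exactly two \emph{consecutive} vertices of $C$ (no vertex of degree $1$, $3$ or more to $C$, and no non-consecutive pair), and only then does the claw analysis, which at that point collapses quickly (e.g.\ a claw whose center and one leaf lie on $C$ instantly yields a diamond). You only prove the much weaker statement $|N_C(x)|\neq 1$, and for $|C|\geq 7$ you delegate all remaining attachment patterns to ``four tools'' with the phrase ``one checks that such a direction always exists.'' That unchecked verification is exactly the nontrivial content: the paper's proof of parts (1) and (2) of Claim~\ref{nei-odd-hole} needs several specific $S_{1,2,2}$/$S_{1,1,3}$/diamond configurations plus the minimality of $C$, and without such a claim your cases (B) and (C) involve two or three off-cycle claw vertices with essentially unconstrained neighborhoods on $C$, so the case analysis you wave away is large and has not been carried out. (Minor slip: for $|C|=5$ a vertex with exactly one neighbor on $C$ gives a $5$-apple, not an $S_{1,2,2}$; the contradiction survives, but via $5$-apple-freeness.)

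A second concrete gap is your Case (C). A claw disjoint from $C$, none of whose vertices has a neighbor on $C$, does \emph{not} ``reduce to (A) or (B) via the connectedness of $G$'': connectivity only supplies a path from $C$ to the claw, not a claw meeting $C$. The paper needs a dedicated argument here (its case (5)): choose a path $v_i=u_1\mbox{-}u_2\mbox{-}\cdots\mbox{-}u_t\mbox{-}a$ so that only $u_2$ has neighbors on $C$, use Claim~\ref{nei-odd-hole} to find a cycle vertex $v_{i+1}$ non-adjacent to $u_2$, note $u_t\notin\{b,c,d\}$ and that diamond-freeness leaves two leaves non-adjacent to $u_t$, and then read off an induced $S_{1,1,3}$ along the path. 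Some argument of this kind is indispensable and is absent from your proposal; as written, Case (C) simply fails. So while the skeleton matches the paper, the proposal as it stands has genuine gaps in both the neighborhood claim for long holes and the disjoint-claw case.
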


\begin{proof}
 Since $G$ is prime, it is connected, and by Lemma~\ref{5apple-C5*-diamond-free-implies-twin-C5-free}, $G$ is twin-$C_5$-free.
  Let $C$ denotes a shortest odd hole $C_{2k+1}$  in $G$ with vertices $v_1, v_2, \ldots,  v_{2k+1}$ and edges $v_iv_{i+1}, v_{2k+1}v_1 \in E$, where $i \in \{1, 2, \ldots, 2k\}$ with $k \geq 2$. Then it is verified that the  following claim holds.

  \begin{clm}\label{nei-odd-hole} If  $x \in V(G) \setminus V(C)$ has a neighbor on $C$, then  there exists an $i$ such that $N(x) \cap V(C) = \{v_i, v_{i+1}\}$.
\end{clm}

\no{\bf Proof of Claim~\ref{nei-odd-hole}}:  If $k = 2$, since $G$ is ($5$-apple, $C_5^*$, twin-$C_5$, diamond)-free, the claim holds. So, suppose that $k \geq 3$. To prove the claim, we prove the following:

\no{\bf (1)} There exists an $i$ such that $xv_i, xv_{i+1} \in E$ and $xv_{i-1}, xv_{i+2} \notin E$.

 \no{\bf (2)}  Either $|N(x) \cap V(C)| = 2$ or $|N(x) \cap V(C)| = 4$. Moreover, there exists an $i$ such that $N(x) \cap V(C) = \{v_i, v_{i+1}\}$ (if $|N(x) \cap V(C)| = 2$), and $N(x) \cap V(C) = \{v_i, v_{i+1}, v_j, v_{j+1}\}$, for some $j \in \{i+3, i+4, \ldots, i+2k-2\}$ (if $|N(x) \cap V(C)| = 4$).

Since $x$ has a neighbor on $C$, we may assume that $x$ is adjacent to $v_{i}$ on $C$.
 If (1) does not hold, then $xv_{i+1}, xv_{i-1} \notin E$.  Then since $\{v_{i-2}, v_{i-1}, v_{i}, v_{i+1}, v_{i+2}, x\}$ does not induce an $S_{1, 2, 2}$ in $G$, we have either $xv_{i-2} \in E$ or $xv_{i+2} \in E$. We may assume, up to symmetry, that $xv_{i-2} \in E$. Then since $\{v_{i+3}, v_{i+2}, v_{i+1}, v_i, v_{i-1}, x\}$ does not induce a $C_5$ or an $S_{1, 1, 3}$ in $G$, we have $xv_{i+2} \in E$.
   Then since $\{v_{i+1}, v_{i+2}, x, v_{i-2}, $ $v_{i-1}, v_{i-3}\}$ does not induce an $S_{1, 1, 3}$ in $G$,  $xv_{i-3} \in E$. Then since $G$ is diamond-free, $\{v_{i+3}, v_{i-3},$ $ x, v_i, v_{i+1}, v_{i-1}\}$ induces an $S_{1, 1, 3}$ in $G$ (if $k = 3$) or $\{v_{i-4}, v_{i-3}, x, v_i, v_{i+1}, v_{i-1}\}$ induces an $S_{1, 1, 3}$ in $G$ (if $k \geq 4$), a contradiction. So (1) holds.

 By (1), we have $\{v_i, v_{i+1}\} \subseteq N(x) \cap V(C)$, and $xv_{i-1}, xv_{i+2} \notin E$.  Further, if there exists an index $j \in \{i+3, i+4, \ldots, i+2k-1\}$ such that $xv_j \in E$ and $xv_{j-1} \notin E$, then $xv_{j+1} \in E$ (for, otherwise, $\{v_{i-1}, v_i, v_{i+1}, v_{i+2}, x\} \cup \{v_{j-1}, v_j, v_{j+1}\}$ induces an $S_{1, 1, 3}$ in $G$).  Now, if $x$ is adjacent to a vertex $v_t$ on $C$, where $t \notin \{i-1, i, i+1, i+2, j-1, j, j+1, j+2\}$, then either  a diamond or an $S_{1, 2, 2}$ is an induced subgraph of $G$, which is a contradiction. Hence $N(x) \cap V(C) = \{v_i, v_{i+1}\}$ or $N(x) \cap V(C) = \{v_i, v_{i+1}, v_j, v_{j+1}\}$, for some $j \in \{i+3, i+4, \ldots, i+2k-2\}$. So (2) holds.

Further, if $|N(x) \cap V(C)| = 4$, then $G$ contains an odd hole $C'$ shorter than $C$, which is a contradiction to the choice of $C$. $\Diamond$

To prove the theorem, we suppose for contradiction that $G$ contains an induced claw, say $K$ with vertex-set $\{a,  b, c, d\}$ and edge-set $\{ab, ac, ad\}$. By Claim~\ref{nei-odd-hole},
 $K$ cannot have more than two vertices on $C$. Also, at most one vertex in $\{b, c, d\}$ belongs to $C$. Now we have following cases (the other cases are symmetric):
\begin{enumerate}
\item[(1)] $V(K) \cap V(C) = \{a, d\}$: Let $y$ be the other neighbor of $a$ on $C$. Then by Claim~\ref{nei-odd-hole}, $by, cy \in E$. But, now $\{a, b, c, y\}$
induces a diamond in $G$, which is a contradiction.
\item[(2)]  $V(K) \cap V(C) = \{a\}$: We may assume (wlog.) that  $a = v_1$. Then by Claim~\ref{nei-odd-hole}, at least two vertices in $\{b, c, d\}$
are  adjacent either to  $v_2$ or to $v_{2k+1}$, say $b$ and $c$ are adjacent to $v_2$. Then $\{a, v_2, b, c\}$ induces a diamond in $G$, which is a contradiction.
\item[(3)] $V(K) \cap V(C) = \{d\}$: We may assume (wlog.) that  $d = v_1$. Then by Claim~\ref{nei-odd-hole}, up to symmetry, we may assume $av_2 \in E$.
Suppose that $k =2$. Then since $G$ is ($S_{1, 1, 3}$, $5$-apple, diamond)-free, both $b$ and $c$ have neighbors in $C$. To avoid a diamond in $G$ and by Claim~\ref{nei-odd-hole}, we assume (wlog.) that $bv_3, bv_4, cv_4, cv_5 \in E$. But, now $\{d=v_1, a, b, v_4, v_5, c\}$ induces a $C_5^*$ in $G$, which is a contradiction.
So, suppose that $k \geq 3$. Then since $G$ is diamond-free, $bv_2, cv_2 \notin E$. We claim that either $bv_3 \in E$ or $cv_3 \in E$. Otherwise, since $\{v_4, v_3, v_2, a, b, c\}$ does not induce an $S_{1, 1, 3}$ in $G$, either $bv_4 \in E$ or $cv_4 \in E$. But, now $\{v_4, v_3, v_2, a, b, c\}$ induces a $C_5$ in $G$, which is a contradiction to the fact that $k \geq 3$ and the choice of $C$. Thus, we may assume that $bv_3 \in E$. Then by Claim~\ref{nei-odd-hole}, $bv_4  \in E$. Then $cv_3 \notin E$ (for, otherwise, by Claim~\ref{nei-odd-hole}, $cv_4 \in E$, but then $\{v_3, v_4, b, c\}$ induces a diamond in $G$), and hence $cv_4 \notin E$ (for, otherwise, $\{a, c, v_4, v_3, v_2\}$ induces a $C_5$ in $G$). Now, $\{v_5, v_4, b, a, c\}$ induces a $C_5$ in $G$ (if $cv_5 \in E$) or $\{v_5, v_4, b, a, d(= v_1), c\}$ induces an $S_{1, 1, 3}$ in $G$ (if $cv_5 \notin E$), a contradiction.

\item[(4)] $V(K) \cap V(C) = \emptyset$ and a vertex of $K$ has a neighbor on $C$:  Assume $a$ has neighbors on $C$, say $v_1$ and $v_2$. Then to avoid an induced claw intersecting $C$, both $v_1$ and $v_2$ have exactly two neighbors among $b, c, d$. We may assume (wlog.) that $v_1$ is adjacent to $b$ and $c$.   But, now $\{v_1, a, b, c\}$ induces a diamond in $G$, which is a contradiction. So, assume that $a$ has no neighbor on $C$. Assume (wlog.) that $b$ has a neighbor on $C$. By Claim~\ref{nei-odd-hole}, we may assume that $N(b) \cap V(C) = \{v_1, v_2\}$. Then since $G$ is $S_{1, 1, 3}$-free, both $c$ and $d$ have neighbors on $C$. Now, $v_2$ is adjacent to either $c$ or $d$ (for, otherwise, $\{v_3, v_2, b, a, c, d\}$ will induce an $S_{1, 1, 3}$ or a $C_5$ in $G$). Assume that $cv_2 \in E$. Since $G$ is diamond-free, $cv_1 \notin E$ and by Claim 1, $cv_3 \in E$. Thus, $N(c) \cap V(C) = \{v_2, v_{3}\}$. By similar arguments, we see that $N(d) \cap V(C) = \{v_1, v_{2k+1}\}$. Now, $\{v_4, v_3, c, a, b ,d\}$ induces an $S_{1, 1, 3}$ in $G$, which is a contradiction.

\item[(5)] $V(K)\cap V(C) = \emptyset$ and no vertex of $K$ has a neighbor on $C$: Since $G$ is connected, there exists an $i \in \{1, 2, \ldots, 2k+1\}$ and a path
$v_i = u_1-u_2-\cdots-u_t-a$, say $P$ connecting $v_i$ and $a$ in $G$ (where $t \geq 2$) and with $u_2$ has a neighbor on $C$. By the choice of $P$, no vertex of this path has a neighbor on $C$ except $u_2$. By Claim~\ref{nei-odd-hole}, either $u_2v_{i+1} \notin E$ or $u_2v_{i-1} \notin E$. Assume that $u_2v_{i+1} \notin E$. Now, $u_t \neq b, c, d$ (for, otherwise (wlog.) if $u_t = b$, then $\{v_{i+1}, v_i = u_1, u_2, \ldots, u_t = b, a, c, d\}$ induces an $S_{1, 1, 3}$ in $G$). Then since $G$ is diamond-free, at least two vertices in $\{b, c, d\}$ are not adjacent to $u_t$, say $b$ and $c$. Then $\{v_{i+1}, v_i = u_1, u_2, \ldots, u_t, a, b, c\}$ induces an $S_{1, 1, 3}$ in $G$, which is a contradiction.
\end{enumerate}

\noindent{Hence} $G$ is claw-free, and this completes the proof of the theorem.
\end{proof}

\begin{theorem} \label{mwis-SSD5-appleC5*-free-time}
The MWIS problem can be solved in polynomial time for ($S_{1, 2, 2}$, $S_{1, 1, 3}$, diamond, 5-apple, $C_5^*$)-free graphs.
\end{theorem}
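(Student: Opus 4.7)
The plan is to reduce to the prime case via Theorem~\ref{thm:LM} and then split on whether the prime graph contains an odd hole. By Theorem~\ref{thm:LM}, it suffices to give a polynomial-time algorithm for every prime graph $G$ in the class, so fix such a $G$.

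If $G$ contains an odd hole $C_{2k+1}$ with $k\geq 2$, then Theorem~\ref{SSDC5-free-odd-hole impliesclaw-free} immediately tells me that $G$ is claw-free, and I invoke Minty's polynomial-time MWIS algorithm for claw-free graphs \cite{Minty} (or any of its later refinements).

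Otherwise $G$ is odd-hole-free. I would then argue that $G$ is in fact \emph{perfect}: by the Strong Perfect Graph Theorem it suffices to also rule out every induced odd antihole $\overline{C_{2k+1}}$ with $2k+1\geq 5$ from $G$. The antihole $\overline{C_5}=C_5$ is already excluded as an odd hole. For $2k+1\geq 7$, the four vertices $v_1,v_3,v_4,v_6$ of a labelled $C_{2k+1}$ have only the single edge $v_3v_4$ among them in $C_{2k+1}$, hence in $\overline{C_{2k+1}}$ they induce a copy of $K_4$ minus the edge $v_3v_4$, i.e.\ a diamond; so the diamond-freeness of $G$ forbids every such antihole too. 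Hence $G$ is perfect, and MWIS on $G$ is solvable in polynomial time via the Gr\"otschel--Lov\'asz--Schrijver algorithm \cite{GLS}.

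The substantive work has already been done in Theorem~\ref{SSDC5-free-odd-hole impliesclaw-free}; the only additional step I need is the one-line diamond-in-odd-antihole observation above to invoke the Strong Perfect Graph Theorem, so no further obstacle is anticipated.
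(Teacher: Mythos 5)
Your proof is correct, and the claw-free branch (prime $G$ with an odd hole $\Rightarrow$ claw-free by Theorem~\ref{SSDC5-free-odd-hole impliesclaw-free} $\Rightarrow$ Minty) coincides with the paper's. Where you genuinely diverge is the odd-hole-free case: the paper simply observes that such a $G$ is (odd-hole, diamond)-free and cites the Brandst\"adt--Mosca result \cite{BM-odd-hole} (diamond-free implies dart-free, so that result applies), whereas you prove perfection directly -- your observation that for $2k+1\geq 7$ the vertices $v_1,v_3,v_4,v_6$ of $C_{2k+1}$ carry only the edge $v_3v_4$, so that $\overline{C_{2k+1}}$ contains a diamond, is correct, and together with the excluded odd holes and the Strong Perfect Graph Theorem it gives perfection, after which \cite{GLS} applies. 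Your route is self-contained relative to the paper's bibliography but leans on two heavy black boxes (SPGT and the ellipsoid-based GLS algorithm, the latter not a combinatorial $O(n^p)$ routine, which matters only if one wants the precise time bound fed into Theorem~\ref{thm:LM}); the paper's route stays combinatorial at the cost of an external structural/algorithmic citation. One small point worth making explicit in either write-up: the case distinction should be implemented by testing claw-freeness (trivially polynomial) rather than by detecting an odd hole, since a prime non-claw-free graph in the class must be odd-hole-free by the contrapositive of Theorem~\ref{SSDC5-free-odd-hole impliesclaw-free}; the paper leaves this just as implicit as you do, so it is not a gap particular to your argument.
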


\begin{proof} Let $G$ be an ($S_{1, 2, 2}$, $S_{1, 1, 3}$,  diamond, $5$-apple, $C_5^*$)-free
graph.
If $G$ is odd-hole-free, then $G$ is (odd-hole, diamond)-free. Since MWIS in (odd-hole, diamond)-free graphs can be solved in polynomial time \cite{BM-odd-hole}, MWIS can be solved in polynomial time for $G$. Suppose that $G$ is prime
  and contains an odd-hole. Then by Theorem~\ref{SSDC5-free-odd-hole impliesclaw-free}, $G$ is claw-free. Since  MWIS in claw-free graphs can be solved in polynomial time \cite{Minty}, MWIS can be solved in polynomial time for $G$.  Then the time
complexity is the same when $G$ is not prime, by Theorem
\ref{thm:LM}.
\end{proof}

\subsection{($S_{1, 2, 2}$, $S_{1, 1, 3}$, diamond, 5-apple)-free graphs}

\begin{theorem} \label{SSD5-apple-freeimpliesc5*-free}
Let $G = (V, E)$ be an ($S_{1, 2, 2}$, $S_{1, 1, 3}$, diamond, 5-apple)-free graph. Then  $G$ is nearly $C_5^*$-free.
\end{theorem}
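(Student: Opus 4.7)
The plan is to argue by contradiction. Suppose that for some $v\in V(G)$ the graph $G[V\setminus N[v]]$ contains an induced $C_5^*$. Among all such pairs $(v,H)$ of a vertex and an induced $C_5^*$ in $G[V\setminus N[v]]$, I would pick one minimising $d_G(v,V(H))$. Label the vertices of $H$ as $c_1,c_2,c_3,c_4,c_5,x$, with $c_1c_2c_3c_4c_5c_1$ the induced $C_5$ and $x$ the extra vertex, which is attached to three cycle vertices, two of them consecutive (see Figure~\ref{HC}; up to labelling, $N_H(x)=\{c_1,c_2,c_4\}$). Since $v$ is non-adjacent to $V(H)$, one has $d_G(v,V(H))\geq 2$; fix a shortest path $P\colon v=p_0,p_1,\dots,p_m$ with $p_m\in V(H)$, $p_i\notin V(H)$ for $i<m$, and $m\geq 2$. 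Minimality of $d_G(v,V(H))$ forces $p_i$ to have no neighbor in $V(H)$ for every $i\leq m-2$, so only $u:=p_{m-1}$ sees $V(H)$; set $S:=N(u)\cap V(H)\neq\emptyset$.

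The bulk of the proof is a case analysis on $S$. First I would dispatch most candidates by a forbidden induced subgraph inside $G[V(H)\cup\{u\}]$: if $|S\cap\{c_1,\dots,c_5\}|=1$ then $\{c_1,c_2,c_3,c_4,c_5,u\}$ induces a $5$-apple; if $\{c_1,c_2\}\subseteq S$ and $x\notin S$ then $\{u,x,c_1,c_2\}$ is a diamond; and a handful of analogous diamond / $S_{1,2,2}$ / $S_{1,1,3}$ traps sitting inside $V(H)\cup\{u\}$ rule out the remaining obvious configurations. After these local reductions only a short list of $S\subseteq V(H)$ survives. For each surviving $S$ I would bring in the vertex $p_{m-2}$, which is adjacent to $u$ but has no neighbor in $V(H)$, and choose a five-vertex subset $T\subseteq V(H)$ containing an induced $P_2$ or $P_3$ avoiding $N_H(u)$ so that $\{p_{m-2},u\}\cup T$ induces an $S_{1,1,3}$ or an $S_{1,2,2}$ whose long branch runs $p_{m-2}$--$u$--(a vertex of $S$)--(a vertex of $T$). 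A representative case is $S=\{x\}$: then $\{p_{m-2},u,x,c_3,c_4,c_5\}$ has exactly the edges $p_{m-2}u$, $ux$, $xc_4$, $c_3c_4$, $c_4c_5$ and so induces an $S_{1,1,3}$ with centre $c_4$ and long branch $c_4xup_{m-2}$, contradicting $S_{1,1,3}$-freeness of $G$.

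The main obstacle is the breadth of the case analysis rather than any individual step. Because $H$ is not vertex-transitive---the extra vertex $x$ and the two cycle vertices $c_3,c_5$ non-adjacent to $x$ play distinguished roles---each surviving $S\subseteq V(H)$ has to be handled individually, and the appropriate five-vertex set $T$ must be chosen so that no stray edge between $T$ and $N_H(u)$ spoils the intended $S_{1,1,3}$ or $S_{1,2,2}$. Each individual verification is routine, in the style of the case analysis in Theorem~\ref{SSDC5-free-odd-hole impliesclaw-free}, but running through all surviving subsets of $V(H)$ and matching each one with a suitable $T$ is lengthy bookkeeping.
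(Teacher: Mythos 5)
Your argument is essentially the paper's: your vertex $u=p_{m-1}$ and its predecessor $p_{m-2}$ play exactly the roles of $x\in A^+$ and $z\in Q$ in the paper's proof (a neighbour of the $C_5^*$ that has a further neighbour with no neighbour in the $C_5^*$), and your case analysis on $N_H(u)$ --- a 5-apple when $u$ meets the cycle in one vertex, diamonds for the dense attachments, and $S_{1,1,3}$/$S_{1,2,2}$'s routed through $p_{m-2}$--$u$ for the surviving cases --- is the same analysis as the paper's cases (i)--(iv), with your representative case $S=\{x\}$ being the paper's case (i) verbatim. Only a trivial slip: the set $T\subseteq V(H)$ adjoined to $\{p_{m-2},u\}$ must have four vertices, not five, so that the union is a six-vertex $S_{1,1,3}$ or $S_{1,2,2}$, exactly as in your own example $T=\{x,c_3,c_4,c_5\}$.
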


\begin{proof} Let us assume on the contrary
that there is a vertex $v \in V(G)$ such that $G\setminus N[v]$
contains an induced $C_5^*$, say $H$, with vertices named as in Figure~\ref{HC}.
Let $C$ denotes the $5$-cycle induced by the vertices $\{v_1, v_2,
v_3, v_4, v_5\}$ in $H$. For $i \in \{1, 2, \ldots, 6\}$, we define sets $A_i$,
$A_i^+$, $A^+$,  and $Q$ as in the last paragraph of
Section~\ref{NT}. To prove the theorem, it is enough to show that $A^+ = \emptyset$.
Assume to the contrary that $A^+ \neq \emptyset$, and let $x \in A^+$. Then there exists a vertex $z \in Q$ such that
$xz \in E$. Then since $G$ is (5-apple, diamond)-free, $|N_H(x) \cap V(C)| \in \{0, 2, 3\}$. Now:
\begin{enumerate}
\item[(i)] If $|N_H(x) \cap V(C)|= 0$, then since $x\in N(H)$, $xv_6 \in E$. But then $\{z, x, v_6, v_1,$ $ v_2, v_5\}$ induces an $S_{1, 1, 3}$ in $G$, which is a contradiction.

\item[(ii)] If $|N_H(x) \cap V(C)|= 2$, and if $N_H(x) \cap V(C) = \{v_i, v_{i+2}\}$, for some $i \in \{1, 2, 3, 4, 5\}$, $i \mod 5$, then
$\{z, x, v_{i+2}, v_{i+3}, v_{i+4}, v_{i}\}$ induces a 5-apple in $G$, which is a contradiction.

\item[(iii)] If $|N_H(x) \cap V(C)|= 2$, and if $N_H(x) \cap V(C) = \{v_i, v_{i+1}\}$, for some $i \in \{1, 2, 3, 4, 5\}$, $i \mod 5$, then since $\{z, x\} \cup V(H)$ does not induce a diamond or an $S_{1, 1, 3}$ in $G$, we have $i \neq 3$. Again, since $G$ is diamond-free,  $xv_6 \notin E$. But, then $\{z, x\} \cup V(H)$ induces either an $S_{1, 1, 3}$ or an $S_{1, 2, 2}$ in $G$, which is a contradiction.

\item[(iv)] If $|N_H(x) \cap V(C)|= 3$,  then since $G$ is diamond-free, $N_H(x) \cap V(C) = \{v_i, v_{i+1}, v_{i+3}\}$, for some $i \in \{1, 2, 3, 4, 5\}$, $i \mod 5$. Then since $G$ is diamond-free, $i \neq 3$ and  $xv_6 \notin E$. But, then $\{z, x\} \cup V(H)$ induces either an $S_{1, 1, 3}$ or an $S_{1, 2, 2}$ in $G$, which is a contradiction.
\end{enumerate}

\noindent{These} contradictions show that $A^+ = \emptyset$, and hence $G$ is nearly $C_5^*$-free.
\end{proof}

\begin{theorem} \label{mwis-SSD5-apple-free-time}
The MWIS problem can be solved in polynomial time  for ($S_{1, 2, 2}$, $S_{1, 1, 3}$, diamond, 5-apple)-free graphs.
\end{theorem}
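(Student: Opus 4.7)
The plan is to combine Theorem~\ref{SSD5-apple-freeimpliesc5*-free} with Theorem~\ref{mwis-SSD5-appleC5*-free-time} via the standard ``nearly $\cal C$'' reduction given by equation~(1) in Section~\ref{NT}. Let $\cal C$ denote the class of ($S_{1, 2, 2}$, $S_{1, 1, 3}$, diamond, $5$-apple, $C_5^*$)-free graphs. Given an input graph $G$ in the class of ($S_{1, 2, 2}$, $S_{1, 1, 3}$, diamond, $5$-apple)-free graphs, I would first observe that this class is hereditary, so for every vertex $v \in V(G)$, the induced subgraph $G\setminus N[v]$ remains ($S_{1, 2, 2}$, $S_{1, 1, 3}$, diamond, $5$-apple)-free.

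Next I would invoke Theorem~\ref{SSD5-apple-freeimpliesc5*-free} to conclude that $G$ is nearly $C_5^*$-free, meaning that $G\setminus N[v]$ is $C_5^*$-free for every $v\in V(G)$. Combining this with the hereditary property noted above shows that $G\setminus N[v] \in \cal C$ for every $v$. By Theorem~\ref{mwis-SSD5-appleC5*-free-time}, MWIS can be solved on any member of $\cal C$ in polynomial time, say in time $T(n)$.

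Finally, I would apply the recurrence
\begin{equation*}
\alpha_w(G) \;=\; \max\{w(v) + \alpha_w(G\setminus N[v]) \mid v\in V(G)\},
\end{equation*}
which follows from~(1): pick any maximum-weight independent set $I$ and any vertex $v\in I$; then $I\setminus\{v\}$ is a maximum-weight independent set in $G\setminus N[v]$. For each of the $n$ choices of $v$, we solve MWIS on $G\setminus N[v] \in \cal C$ in time $T(n)$, taking the best over all $v$. The total running time is $O(n\cdot T(n))$, which is polynomial, as desired.

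The argument is essentially routine once Theorems~\ref{SSD5-apple-freeimpliesc5*-free} and~\ref{mwis-SSD5-appleC5*-free-time} are in hand, so there is no real obstacle; the only thing to be careful about is to note that the class is hereditary so that $G\setminus N[v]$ remains ($S_{1,2,2}$, $S_{1,1,3}$, diamond, $5$-apple)-free, which together with near-$C_5^*$-freeness places it in $\cal C$.
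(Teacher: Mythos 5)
Your proposal is correct and follows essentially the same route as the paper: invoke Theorem~\ref{SSD5-apple-freeimpliesc5*-free} to get near-$C_5^*$-freeness, note that $G\setminus N[v]$ then lies in the ($S_{1,2,2}$, $S_{1,1,3}$, diamond, 5-apple, $C_5^*$)-free class handled by Theorem~\ref{mwis-SSD5-appleC5*-free-time}, and conclude via the recurrence~(1) with an extra factor of $n$. Your explicit remark about hereditariness is a minor elaboration of what the paper leaves implicit; there is no substantive difference.
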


\begin{proof} Let $G$ be an ($S_{1, 2, 2}$, $S_{1, 1, 3}$, diamond, 5-apple)-free
graph.  Then by Theorem~\ref{SSD5-apple-freeimpliesc5*-free},  $G$ is nearly $C_5^*$-free.  Since  MWIS in  ($S_{1, 2, 2}$, $S_{1, 1, 3}$, diamond, 5-apple, $C_5^*$)-free graphs can be solved in polynomial time (by Theorem~\ref{mwis-SSD5-appleC5*-free-time}), by the consequence given below equation (1) in Section~\ref{NT}, MWIS in  ($S_{1, 2, 2}$, $S_{1, 1, 3}$, diamond, 5-apple)-free graphs can be solved in polynomial time.
\end{proof}

\subsection{($S_{1, 2, 2}$, $S_{1, 1, 3}$, diamond)-free graphs}

\begin{theorem} \label{SSD-freeimplies5apple-free}
Let $G = (V, E)$ be an ($S_{1, 2, 2}$, $S_{1, 1, 3}$, diamond)-free graph. Then every
atom of $G$ is nearly 5-apple-free.
\end{theorem}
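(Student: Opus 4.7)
The plan is to argue by contradiction. Assume some atom $G'$ of $G$ fails to be nearly $5$-apple-free, so there exist $v\in V(G')$ and an induced $5$-apple $H\subseteq G'\setminus N[v]$ whose $5$-cycle is $v_1v_2v_3v_4v_5v_1$ and whose pendant $v_6$ is adjacent to $v_1$. I will use the notation of Section~\ref{NT} with $v$ playing the role of the special vertex and $H$ as the induced subgraph: let $Q$ be the component of $G'\setminus(V(H)\cup N(V(H)))$ that contains $v$, and let $A^+=N(Q)$. Since $G'$ is connected (atoms are connected) and $V(H)$ is disjoint from $Q$, $A^+$ is a non-empty set that separates $H$ from $Q$ in $G'$. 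The goal is to show that $A^+$ is a clique, which would contradict $G'$ being an atom.

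First I would classify $N_H(x)$ for each $x\in A^+$. Pick $z\in N(x)\cap Q$; by construction $z$ has no neighbor in $V(H)$. Diamond-freeness of $G'$ already rules out three consecutive cycle-vertices in $N_H(x)$ as well as the triples $\{v_1,v_6,v_2\}$ and $\{v_1,v_6,v_5\}$. The remaining candidate neighborhoods I would eliminate or restrict by exhibiting, for each, an induced $S_{1,2,2}$ or $S_{1,1,3}$ inside $\{z,x\}\cup S$ for a well-chosen $S\subseteq V(H)$. A typical gadget: the set $\{v_6,v_1,v_5,v_4,x,z\}$ induces an $S_{1,2,2}$ centered at $v_5$ (with arms $v_5v_4$, $v_5v_1v_6$, and $v_5xz$) whenever $xv_5\in E$ and $x$ is non-adjacent to $v_1,v_4,v_6$. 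Analogous gadgets, obtained by rotating along $C_5$ or by centering at $v_1$ and exploiting the pendant $v_6$, kill every $N_H(x)$ with $|N_H(x)|\le 1$, the case $N_H(x)=\{v_6\}$, and most two- and three-element patterns, leaving only a short list of admissible types.

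Next I would show that any two vertices $x,y\in A^+$ are adjacent. Suppose $xy\notin E$. Because $x,y\in N(Q)$ and $Q$ is connected, there exists a shortest $x$-$y$ path $P=x\,p_1\cdots p_r\,y$ whose internal vertices all lie in $Q$; these internal vertices have no neighbors in $V(H)$ and, by minimality, the only $A^+$-vertices in $V(P)$ are $x$ and $y$. Using the admissible forms of $N_H(x)$ and $N_H(y)$ from the classification, and combining them with a suitable $H'\subseteq V(H)$, I would exhibit an induced $S_{1,1,3}$ or $S_{1,2,2}$ spanned by $V(P)\cup H'$, contradicting forbidden-subgraph freeness of $G'$. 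Hence $A^+$ is a clique, and since $A^+$ separates $H$ from $Q$ it is a clique cutset in $G'$, contradicting the atom assumption; this completes the proof.

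The main obstacle is the classification step: the pendant $v_6$ of the $5$-apple breaks the cyclic symmetry of the underlying $C_5$, so the case analysis is noticeably larger than in the proof of Theorem~\ref{SSD5-apple-freeimpliesc5*-free}, where $C_5^*$-freeness provided additional symmetry. The admissible neighborhood types must be pinned down precisely because the adjacency step reuses them, and each non-adjacent pair $(x,y)$ then needs its own forbidden-subgraph witness built from $V(P)$ and $V(H)$.
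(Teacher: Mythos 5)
Your overall framework is the same as the paper's (assume an atom $G'$ and a vertex $v$ whose anti-neighborhood contains a $5$-apple $H$, take $Q$ and $A^+=N(Q)$, classify $N_H(x)$ for $x\in A^+$ using a pendant neighbor $z\in Q$, show $A^+$ is a clique, and contradict the atom property), and your individual observations (diamond-freeness excludes three consecutive cycle-neighbors and the triples containing $v_1,v_6$ and a second cycle-neighbor of $v_1$; the $S_{1,2,2}$ gadget centered at $v_5$) are correct. The gap is that the heart of the proof is exactly the step you leave as ``a short list of admissible types,'' and without executing it the clique step cannot be assessed. What the paper actually establishes is very specific: every $x\in A^+$ has exactly three neighbors on the cycle $C$, of the form $\{v_j,v_{j+1},v_{j+3}\}$, and the pendant $v_6$ then breaks the symmetry so that the only surviving type is $N_H(x)\cap V(C)=\{v_1,v_3,v_4\}$ (with or without $v_6$). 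This uniformity is the whole point: it is what forces $A^+$ to be a clique, and your proposal never identifies it.

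Moreover, your second step is both underdeveloped and, as described, not workable: you propose to take a shortest $x$--$y$ path $P$ through $Q$ and exhibit an induced $S_{1,1,3}$ or $S_{1,2,2}$ ``spanned by $V(P)\cup H'$,'' but these forbidden graphs have only six vertices, so an arbitrarily long $P$ cannot be used, and truncating $P$ destroys the adjacencies to $H$ you would need at its ends (this path trick is suited to producing longer holes, not bounded-size subgraphs). It is also unnecessary: once the classification above is in place, any two vertices $x,y\in A^+$ share the neighbors $v_1,v_3,v_4$, so $\{v_1,v_4,x,y\}$ is a diamond if $xy\in E$ and $\{v_3,v_4,x,y\}$ is a diamond if $xy\notin E$; hence $|A^+|\le 1$ and the clique-cutset contradiction follows immediately, with no path argument at all. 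So the proposal is a plausible plan that matches the paper's strategy in outline, but the decisive case analysis is missing and the replacement offered for the final step would not go through as stated.
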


\begin{proof} Let $G'$ be an atom of $G$.  We want to show
that $G'$ is nearly 5-apple-free, so let us assume on the contrary
that there is a vertex $v \in V(G')$ such that $G'\setminus N[v]$
contains an induced 5-apple $H$.  Let $H$ have vertex set $\{v_1, v_2,
v_3, v_4, v_5, v_6\}$ and edge set $\{v_1v_2, v_2v_3, v_3v_4, $ $
v_4v_5, v_5v_1, v_1v_6\}$. Let $C$ denotes the $5$-cycle induced by the vertices $\{v_1,  v_2,
v_3, v_4, v_5\}$ in $H$. For $i\in \{1, 2, \ldots, 6\}$, we define sets $A_i$,
$A_i^+$,  $A^+$,  and $Q$, with respect to $G$, $v$ and $H$, as in the last paragraph of
Section~\ref{NT}.  Then, we immediately have the following:

\begin{clm}\label{cl:T3-nei}
 If $x \in N(H)$, then $|N_H(x) \cap V(C)| \leq 3$. In particular, if $x \in A^+$, then $|N_H(x) \cap V(C)| = 3$, and hence there exists an index
$j \in \{1, 2, \ldots, 5\}$, $j \mod 5$, such that $N_H(x) \cap V(C) = \{v_j, v_{j+1}, v_{j+3}\}$. $\Diamond$
\end{clm}

So, we have:
\begin{clm}\label{cl:T3-a1p-a2p-a5p-a6p}
$A_1^+ = A_2^+ = A_5^+ = A_6^+ = \emptyset$.
\end{clm}

\begin{clm}\label{cl:T3-a3p}
If $x \in A_3^+$, then $N_H(x)$ is equal to $\{v_1, v_3, v_4\}$.
\end{clm}

\no{\it Proof of Claim~\ref{cl:T3-a3p}}.  Suppose not.  Then by Claim~\ref{cl:T3-nei}, there exists an index
$j \in \{1, 2, 4,  5\}$,  $j \mod 5$, such that $N_H(x) = \{v_j, v_{j+1}, v_{j+3}\}$. Since $x \in A_3^+$, $xv_6 \notin E$, and there exists a vertex $z$ in $Q$ such that $xz \in E$.  Now, if $N_H(x) = \{v_1, v_2, v_4\}$, then $\{v_6, v_1, x, v_4, v_3, z\}$ induces an
$S_{1, 2, 2}$ in $G$, and if $N_H(x) =\{v_2, v_3, v_5\}$, then $\{v_6, v_1, v_5, x, v_3, z\}$ induces an $S_{1, 1, 3}$ in $G$, a
contradiction.  Since the other cases are symmetric, the claim
follows.  $\Diamond$

\begin{clm}\label{cl:T3-a3pc}
$|A_3^+| = 1$.
\end{clm}

\no{\it Proof of Claim~\ref{cl:T3-a3pc}}.
 Suppose not.  Let  $x, y \in
A_3^+$. By Claim~\ref{cl:T3-a3p}, $N_H(x) = N_H(y) = \{v_1, v_3, v_4\}$. Now, if $xy \in E$, then $\{v_4, x, y, v_1\}$ induces a diamond in $G$,
 and if $xy \notin E$, then $\{v_4, x, y, v_3\}$ induces a diamond in $G$, a contradiction.
  $\Diamond$

\begin{clm}\label{cl:T3-a4p}
If $x \in A_4^+$, then $N_H(x)$ is equal to $\{v_1, v_3, v_4, v_6\}$.
\end{clm}

\no{\it Proof of Claim~\ref{cl:T3-a4p}}.   Suppose not.  Then by Claim~\ref{cl:T3-nei}, there exists an index
$j \in \{1, 2, 4,  5\}$,  $j \mod 5$, such that $N_H(x) \cap V(C) = \{v_j, v_{j+1}, v_{j+3}\}$. Since $x \in A_4^+$, $xv_6 \in E$ and there exists a vertex $z$ in $Q$ such that $xz \in E$.  Now, if $N_H(x) \cap V(C) = \{v_1, v_2, v_4\}$, then $\{v_6, v_1, v_2, x\}$ induces a diamond in $G$, and if $N_H(x) \cap V(C) =\{v_2, v_3, v_5\}$, then $\{v_1, v_6, x, v_3, v_4, z\}$ induces an $S_{1, 2, 2}$ in $G$, a contradiction.  Since the other cases are symmetric, the claim follows. $\Diamond$

\begin{clm}\label{cl:T3-a4pc}
$|A_4^+| = 1$.
\end{clm}

\no{\it Proof of Claim~\ref{cl:T3-a4pc}}.  Suppose not.  Let  $x, y \in
A_4^+$. By Claim~\ref{cl:T3-a4p}, $N_H(x) = N_H(y) = \{v_1, v_3, v_4, v_6\}$. Now, if $xy \in E$, then $\{v_4, x, y, v_3\}$ induces a diamond in $G$, a contradiction
 and if $xy \notin E$, then $\{v_3, v_4, x, y\}$ induces a diamond in $G$, a contradiction.   $\Diamond$

\begin{clm}\label{cl:T3-a3p-a4p}
At most one of $A_3^+$ or $A_4^+$ is non-empty.
\end{clm}

\no{\it Proof of Claim~\ref{cl:T3-a3p-a4p}}.
 Suppose not.  Let  $x \in A_3^+$ and $y \in A_4^+$. By Claim~\ref{cl:T3-a3p}, $N_H(x) = \{v_1, v_3, v_4\}$, and by Claim~\ref{cl:T3-a4p}, $N_H(y) = \{v_1, v_3, v_4, v_6\}$. Now, if $xy \in E$, then $\{v_3, x, y, v_1\}$ induces a diamond in $G$,  a contradiction,  and if $xy \notin E$, then $\{v_3, v_4, x, y\}$ induces a diamond in $G$, a contradiction.
  $\Diamond$

\medskip
Now, by Claim~\ref{cl:T3-a1p-a2p-a5p-a6p}, $A^+ = A_3^+ \cup A_4^+$, and by Claims~\ref{cl:T3-a3pc}, \ref{cl:T3-a4pc} and \ref{cl:T3-a3p-a4p}, $A^+$ is a clique. Since $A^+$ is a separator between $H$ and $Q$ in $G$, we obtain that
$V(G')\cap A^+$ is a clique separator in $G'$ between $H$ and
$V(G')\cap Q$ (which contains $v$).  This is a contradiction to the
fact that $G'$ is an atom.  \end{proof}

\begin{theorem} \label{mwis-SSD-free-time}
The MWIS problem can be solved in polynomial time for ($S_{1, 2, 2}$, $S_{1, 1, 3}$, diamond)-free graphs.
\end{theorem}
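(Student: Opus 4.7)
The plan is straightforward: I would combine the structural result in Theorem~\ref{SSD-freeimplies5apple-free} with the algorithmic result in Theorem~\ref{mwis-SSD5-apple-free-time} via the atom-based framework of Theorem~\ref{thm:Tar}. Let $\mathcal{C}$ denote the class of ($S_{1, 2, 2}$, $S_{1, 1, 3}$, diamond, $5$-apple)-free graphs. By Theorem~\ref{mwis-SSD5-apple-free-time}, MWIS is solvable in time $O(f(n))$ on $\mathcal{C}$ for some polynomial $f$.

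Next, I would verify that the hypothesis of Theorem~\ref{thm:Tar} applies to the hereditary class $\mathcal{G}$ of ($S_{1, 2, 2}$, $S_{1, 1, 3}$, diamond)-free graphs. Given any graph $G \in \mathcal{G}$ and any atom $G'$ of $G$, the graph $G'$ is itself an induced subgraph of $G$ and hence is also ($S_{1, 2, 2}$, $S_{1, 1, 3}$, diamond)-free. By Theorem~\ref{SSD-freeimplies5apple-free}, $G'$ is nearly $5$-apple-free; that is, for every vertex $v \in V(G')$, the graph $G' \setminus N[v]$ is $5$-apple-free. Combined with the fact that $G' \setminus N[v]$ inherits being ($S_{1, 2, 2}$, $S_{1, 1, 3}$, diamond)-free, we conclude $G' \setminus N[v] \in \mathcal{C}$, so every atom of $G$ is nearly $\mathcal{C}$.

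Finally, I would invoke Theorem~\ref{thm:Tar} directly: since $\mathcal{G}$ is hereditary and all its atoms are nearly $\mathcal{C}$, MWIS can be solved on $\mathcal{G}$ in time $O(n^2 \cdot f(n))$, which is polynomial. This completes the argument.

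I do not anticipate a significant obstacle here, since all the heavy lifting was already carried out in Theorems~\ref{SSD-freeimplies5apple-free} and \ref{mwis-SSD5-apple-free-time}. The only thing to be careful about is to state explicitly that the atom inherits the three forbidden subgraphs (so that ``nearly $5$-apple-free'' upgrades to ``nearly $\mathcal{C}$''), which is immediate since $\mathcal{G}$ is hereditary and taking atoms preserves induced-subgraph membership.
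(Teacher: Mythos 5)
Your proposal is correct and follows exactly the paper's argument: apply Theorem~\ref{SSD-freeimplies5apple-free} to conclude every atom is nearly $5$-apple-free, then invoke Theorem~\ref{mwis-SSD5-apple-free-time} together with Theorem~\ref{thm:Tar}. The only difference is that you make explicit the (immediate) hereditary step that the anti-neighborhood in an atom remains ($S_{1,2,2}$, $S_{1,1,3}$, diamond)-free, which the paper leaves implicit.
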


\begin{proof} Let $G$ be an ($S_{1, 2, 2}$, $S_{1, 1, 3}$, diamond)-free
graph.  Then by Theorem~\ref{SSD-freeimplies5apple-free},  every atom of $G$ is nearly 5-apple-free.  Since  MWIS in  ($S_{1, 2, 2}$, $S_{1, 1, 3}$, diamond, 5-apple)-free graphs can be solved in polynomial time (by Theorem~\ref{mwis-SSD5-apple-free-time}), MWIS in  ($S_{1, 2, 2}$, $S_{1, 1, 3}$, diamond)-free graphs can be solved in polynomial time, by Theorem~\ref{thm:Tar}.
\end{proof}

\section{($S_{1, 2, 2}$, $S_{1, 1, 3}$, co-chair)-free graphs}

In this section, we show  that the MWIS problem can be efficiently solved in the class of ($S_{1, 2,
2}$, $S_{1, 1,3}$, co-chair)-free graphs by analyzing the atomic structure of the subclasses of this class of graphs.

\subsection{($S_{1, 2, 2}$, $S_{1, 1, 3}$,  co-chair, gem)-free graphs}
\begin{theorem} \label{mwis-SSCG-free-time}
The MWIS problem can be solved in polynomial time for ($S_{1, 2, 2}$, $S_{1, 1, 3}$,  co-chair, gem)-free graphs.
\end{theorem}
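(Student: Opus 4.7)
The plan is to reduce to the prime case and then invoke the diamond-free result from the previous section. First I would appeal to Theorem~\ref{thm:LM}: since the class of $(S_{1,2,2}, S_{1,1,3}, \text{co-chair}, \text{gem})$-free graphs is hereditary, it suffices to solve MWIS in polynomial time on every prime graph $G$ in the class.

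So let $G$ be a prime $(S_{1,2,2}, S_{1,1,3}, \text{co-chair}, \text{gem})$-free graph. Since $G$ is in particular prime (co-chair, gem)-free, Lemma~\ref{GC-freeimpliesD-free} (from \cite{K1}) immediately yields that $G$ is diamond-free. Hence $G$ is $(S_{1,2,2}, S_{1,1,3}, \text{diamond})$-free, and Theorem~\ref{mwis-SSD-free-time} delivers a polynomial-time algorithm for MWIS on $G$.

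Combining the two observations, MWIS is solvable in polynomial time on every prime graph of the class, and by Theorem~\ref{thm:LM} the same bound (plus an $O(|E(G)|)$ term) carries over to the full class. There is essentially no obstacle here: the content has been pushed entirely into Lemma~\ref{GC-freeimpliesD-free} and Theorem~\ref{mwis-SSD-free-time}, so the remaining proof is a two-line citation chain.
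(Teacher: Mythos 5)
Your proposal is correct and follows exactly the paper's own argument: reduce to prime graphs via Theorem~\ref{thm:LM}, apply Lemma~\ref{GC-freeimpliesD-free} to conclude a prime (co-chair, gem)-free graph is diamond-free, and then invoke Theorem~\ref{mwis-SSD-free-time} for the $(S_{1,2,2}, S_{1,1,3}, \text{diamond})$-free case. No gaps; the two proofs are essentially identical.
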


\begin{proof} Let $G$ be an ($S_{1, 2, 2}$, $S_{1, 1, 3}$, co-chair, gem)-free
graph.  First suppose that $G$ is prime. Then by
 Lemma~\ref{GC-freeimpliesD-free}, $G$
is diamond-free. Since the MWIS problem in ($S_{1, 2, 2}$, $S_{1, 1, 3}$, diamond)-free graphs can be solved in  polynomial time (by Theorem~\ref{mwis-SSD-free-time}), MWIS can be solved in
polynomial time for $G$, by Theorem~\ref{thm:LM}.  Then the time
complexity is the same when $G$ is not prime, by Theorem
\ref{thm:LM}.
\end{proof}

\subsection{($S_{1, 2, 2}$, $S_{1, 1, 3}$,  co-chair, $H^*$)-free graphs}
\begin{theorem}\label{SSCH*-free-implies-Gem-free}
Let $G = (V, E)$ be a prime ($S_{1, 2, 2}$, $S_{1, 1, 3}$,  co-chair, $H^*$)-free graph.  Then every
atom of $G$ is nearly gem-free (see Figure~\ref{HC} for the graph $H^*$).
\end{theorem}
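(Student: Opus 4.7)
The plan is to mirror Theorem~\ref{SSD-freeimplies5apple-free} with ``gem'' in place of ``5-apple.'' Let $G'$ be an atom of $G$, and suppose for a contradiction that there exists $v\in V(G')$ such that $G'\setminus N[v]$ contains an induced gem $H$. I would label the vertices of $H$ so that $v_1v_2v_3v_4$ is an induced $P_4$ and $v_5$ is adjacent to each of $v_1,v_2,v_3,v_4$. Relative to $G$, $v$, and $H$, I would introduce the sets $A_i$, $A_i^+$, $A_i^-$, $A^+$, and $Q$ as defined at the end of Section~\ref{NT}. The target is to prove that $A^+$ is a clique, since $A^+$ separates $H$ from $Q$ in $G$, and so $V(G')\cap A^+$ would then be a clique separator in $G'$ between $H$ and $V(G')\cap Q$ (which contains $v$), contradicting the fact that $G'$ is an atom.

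The first step is to classify the possible attachments $N_H(x)$ for $x\in A^+$. By definition, there is $z\in Q$ with $xz\in E$ and $z$ has no neighbor in $H$. I would enumerate the candidate subsets of $V(H)$ and, for each one, try to realize a forbidden induced subgraph on $\{x,z\}\cup V(H)$: a co-chair, an $S_{1,2,2}$, an $S_{1,1,3}$, a copy of $H^*$, or (using that $G$ is prime and co-chair-free) one of the graphs $H_1,\ldots,H_8$ from Lemma~\ref{primeSSCfree-forb}. Exploiting the symmetry $(v_1,v_2)\leftrightarrow(v_4,v_3)$ and the fact that $v_5$ is universal to the $P_4$, I expect this to collapse to only a handful of surviving types, all of which contain $v_5$ and a consecutive pair of $P_4$-vertices; this is analogous to how Claims~\ref{cl:T3-nei}--\ref{cl:T3-a3p-a4p} cut down the options in the 5-apple proof.

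In the second step, having bounded the set of attachment types, I would verify two flavors of clique property: (i) for each surviving type $T$, any two vertices $x,y\in A^+$ with $N_H(x)=N_H(y)=T$ are adjacent, since otherwise $\{x,y\}$ together with a suitably chosen pair of vertices in $T$ induces a diamond or, when the diamond is avoided by some coincidence, a co-chair or one of the $H_i$; and (ii) for any two distinct types $T\ne T'$, vertices $x\in A^+$ realizing $T$ and $y\in A^+$ realizing $T'$ are forced to be adjacent, by running the same kind of local argument and, when necessary, invoking their $Q$-neighbors $z_x,z_y$ to rule out the non-edge via $S_{1,2,2}$ or $S_{1,1,3}$. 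Together, (i) and (ii) yield that $A^+$ is a clique, completing the contradiction.

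The principal obstacle is the classification in step one: the gem is a dense five-vertex graph with a universal vertex, and many short induced paths, so it appears inside a lot of the forbidden patterns, and the analysis has to exhaust $|N_H(x)|\in\{1,2,3,4,5\}$ with care. I expect that one particular attachment type is eliminated precisely by $H^*$-freeness; otherwise $H^*$ would not need to appear in the hypothesis, and this is the clue that guides where to deploy the $H^*$-free assumption. Once the classification is secured, the clique-ification of $A^+$ in step two is largely routine diamond/co-chair bookkeeping of the kind already carried out in Claims~\ref{cl:T3-a3pc}--\ref{cl:T3-a3p-a4p}.
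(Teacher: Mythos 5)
There is a genuine gap: your whole plan rests on showing that $A^+$ is a clique, but for the gem this is not achievable, and the paper does not prove it. The problematic attachment type is $N_H(x)=\{v_5\}$, i.e.\ the set $A_1^+$ of vertices adjacent in $H$ only to the universal vertex $v_5$ and having a neighbor $z\in Q$. No forbidden configuration kills this type: since $v_1,\dots,v_4$ are all adjacent to $v_5$, the set $\{z,x\}\cup V(H)$ contains no co-chair, no $S_{1,2,2}$, no $S_{1,1,3}$, no $H^*$, and none of the graphs $H_1,\dots,H_8$, so your expected classification (``all surviving types contain $v_5$ and a consecutive pair of $P_4$-vertices'') is false, and your step (ii) also breaks down: two nonadjacent vertices of $A_1^+$, or a vertex of $A_1^+$ nonadjacent to a vertex of $A_5^+$, do not create any of the forbidden subgraphs (one only gets a cricket- or paw-like configuration), so there is no way to force $A^+$ to be a clique. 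This is exactly where the gem case genuinely differs from Theorem~\ref{SSD-freeimplies5apple-free}, where diamond-freeness does all of that work.

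The paper's proof instead establishes $A_2^+=A_3^+=A_4^+=\emptyset$ (this is where $H^*$-freeness enters, as you guessed), shows $A_5^+$ is a clique via a primality/module argument, and then splits into two cases. If $A_1^+=\emptyset$, then $A^+=A_5^+$ is a clique separator and the atom contradiction follows as you intended. If $A_1^+\neq\emptyset$, it proves three auxiliary facts -- $v_5$ is complete to $B=A_2^-\cup A_3^-\cup A_4^-\cup A_5^-$, there is no edge between $A_1^+$ and $B$, and no induced path $a$--$x$--$b$ with $a\in A_1^+$, $x\in A_1^-$, $b\in B$ -- and deduces that $N(v_4)\subseteq\{v_3,v_5\}\cup B\cup A_5^+$, so that $A_5^+\cup\{v_2,v_3,v_5\}$ (a clique built partly from gem vertices) separates the single vertex $v_4$ from $Q$ in $G'$, again contradicting that $G'$ is an atom. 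This alternative separator, cutting off $v_4$ rather than all of $H$, is the key idea your proposal is missing; without it, the argument cannot be completed along the lines you describe.
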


\begin{proof}
Let $G'$ be an atom of $G$.  We want to show that $G'$ is nearly
gem-free, so let us assume on the contrary that there is a vertex $v
\in V(G')$ such that the anti-neighborhood of $v$ in $G'$ contains an
induced gem $H$.  Let $H$ have vertex set $\{v_1, v_2, v_3, v_4,
v_5\}$ and edge set $\{v_1v_2, v_2v_3, v_3v_4, v_1v_5, v_2v_5,$
$v_3v_5, v_4v_5\}$.  For $i\in \{1, 2,  \ldots, 5\}$, we define sets $A_i$,
$A_i^+$, $A_i^-$, and $Q$, with respect to $G$, $v$ and $H$, as in the last paragraph of Section~\ref{NT}.  Then
we have the following properties:

\begin{clm}\label{clm:neigT2}
Every vertex $x$ in $N(H)$ satisfies either
$v_5\in N(x)$ or $x$ has at least one neighbor in $\{v_2, v_3\}$.  In particular, (i) if $x
\in A_1$, then $N_H(x) =\{v_5\}$, and (ii)  if $x \in A_2$, then $N_A(x) \in \{ \{v_2, v_3\}, \{v_2, v_5\}, \{v_3, v_5\},  \{v_1, v_3\},$ $ \{v_2, v_4\}\}$.
\end{clm}

\no{\it Proof of Claim~\ref{clm:neigT2}}. Suppose to the contrary that $xv_5\notin
E$ and $x$ has no  neighbor in $\{v_2, v_3\}$.  Then, up to symmetry, we have $xv_1\in
E$, and so $\{x, v_1, v_2, v_3, v_5\}$ induces a co-chair in $G$, a
contradiction.    $\Diamond$

\medskip
Let $B^*$ denote the set $\{x \in A_2 : N_A(x) = \{v_1, v_3\}\} \cup \{x \in A_2 : N_A(x) = \{v_2, v_4\}\}$.

\begin{clm}\label{clm:T2-A2+A3+A4+-empty}
$A_2^+ = A_3^+ = A_4^+ = \emptyset$.
\end{clm}

\no{\it Proof of Claim~\ref{clm:T2-A2+A3+A4+-empty}}.

Assume the contrary and let $x \in
A_2^+\cup A_3^+\cup A_4^+$.  There is a vertex $z$ in $Q$ such that
$xz \in E$.  First suppose that $xv_5 \notin E$. Then by  Claim~\ref{clm:neigT2}, $x$ has at least one neighbor in $\{v_2, v_3\}$. Now, if $\{v_2, v_3\} \subseteq N_H(x)$, then $\{v_5, v_2, v_3, x, z\}$ induces a co-chair in $G$, which is a contradiction. So, we may assume that $x$ has exactly one neighbor in $\{v_2, v_3\}$, say $xv_2 \in E$ and $xv_3 \notin E$. Since $x\in A_i^+$ ($i \ge 2$), either $xv_1 \in E$ or $xv_4 \in E$. But, then either $\{v_5, v_1, v_2, x, z\}$ induces a co-chair in $G$, or $\{v_5, v_2, v_3, v_4, x ,z\}$ induces a $H^*$ in $G$, respectively, a contradiction.
So, suppose that $xv_5 \in E$. Then it follows that there is a clique
$\{p,q,r\}\subset V(H)$ such that $xp, xq\in E$ and $xr\notin
E$.  Then $\{z, x, p, q, r\}$ induces a co-chair in $G$, a
contradiction.  $\Diamond$

\begin{clm}\label{clm:T2-A5+-Ai-complete}
For each $i\in\{2,\ldots,5\}$, $A_5^+$ is
complete to $A^-_i$.
\end{clm}

\no{\it Proof of Claim~\ref{clm:T2-A5+-Ai-complete}}.
Assume the contrary.  Let $x \in A_5^+$
and $y \in A^-_i$ be such that $xy \notin E$.  Since $x \in A_5^+$,
there exists $z \in Q$ such that $xz \in E$.  Now, if $y \in (\bigcup_{i =2}^5 A^-_i)\setminus B^*$, then there exist vertices
$p,q \in V(H)$ such that $pq \in E$ and $yp, yq\in E$.  Then $\{z, x, p, q, y\}$ induces a co-chair in $G$, a
contradiction.  So, $y \in B^*$. But, now,  $\{x, v_4, v_5, v_1, y\}$ induces a co-chair in $G$, a contradiction. $\Diamond$

\begin{clm}\label{clm:T2:A5+clique}
$A_5^+$ is a clique.
\end{clm}

\no{\it Proof of Claim~\ref{clm:T2:A5+clique}}:
Suppose the contrary.  Then $G[A_5^+]$ has
a co-connected component $X$ of size at least~$2$.  Since $G$ is prime, $X$ is not a module in $G$ in $G$, so there is a vertex $z$ in $V(G)\setminus X$ that distinguishes two vertices $x$ and $y$ of
$X$, and since $X$ is co-connected we can choose $x$ and $y$
non-adjacent.  Clearly $z\notin H$ and $z\notin A_5^+$.  So either (i)
$z$ has no neighbor in $H$, or (ii) $z \in A^-$ and so, by Claim~\ref{clm:T2-A5+-Ai-complete}
(since $\{z\}$ is not complete to $A_5^+$), $z \in A^-_1$, or (iii) $z \in
A^+$ and so, by Claim~\ref{clm:T2-A2+A3+A4+-empty}, $z \in A^+_1$.  In either of these three
cases, by Claim~\ref{clm:neigT2}, we see that $\{z, x, y, v_1, v_2\}$ induces a
co-chair in $G$, a contradiction.  $\Diamond$

\medskip

Let $B = A_2^- \cup A_3^- \cup A_4^- \cup A_5^-$.

\begin{clm}\label{clm:T2-A1+-v5-B}
If $A_1^+\neq\emptyset$, then $\{v_5\}$ is
complete to $B$.
\end{clm}
\no{\it Proof of Claim~\ref{clm:T2-A1+-v5-B}}: Assume on the contrary that there is a
vertex $x \in B$ such that $xv_5\notin E(G)$.  Since $A_1^+ \neq \emptyset$, there is a
vertex $a \in A_1^+$ and a vertex $z \in Q$ such that $az \in E(G)$.
Recall that $N_H(a)=\{v_5\}$, by Claim~\ref{clm:neigT2}. Now, if there exists vertices $p, q \in \{v_1, v_2, v_3, v_4\}$ such that $pq \in E$ and
 $xp, xq \in E$, then since $\{x, p, q, v_5, a\}$ does not induce a co-chair in $G$, $xa \in E$. But, then $\{z, a, x, v_5, p, q\}$ induces a $H^*$ in $G$, a contradiction. So, we may assume that $N_H(x) \cap \{v_1, v_2, v_3, v_4\}$ is an independent set. Hence by Claim~\ref{clm:neigT2}, $N_H(x)$ is either $\{v_1, v_3\}$ or $\{v_2, v_4\}$.  We may assume, up to symmetry, that $N_H(x) = \{v_1, v_3\}$. Then since $\{z, a, v_5, v_1, x, v_4\}$ does not induce an $S_{1, 2, 2}$ in $G$, $xa \in E$. But, then $\{z, a, x, v_3, v_2, v_4\}$ induces an $S_{1, 1, 3}$ in $G$, a contradiction.
  $\Diamond$

\begin{clm}\label{clm:T2-A1+-B}
There is no edge between $A_1^+$ and $B$.
\end{clm}

\no{\it Proof of Claim~\ref{clm:T2-A1+-B}}: Assume the contrary, and let $a \in A_1^+$
and $b \in B$ be such that $ab \in E$.  Since $a \in A_1^+$, there
exists $y \in Q$ such that $ay \in E$.  Since $b \in B$, by Claim~\ref{clm:T2-A1+-v5-B}
there exists an index $j$ ($j \in \{1, \ldots, 4\}$) such that $bv_5, bv_j \in
E$.  Then $\{y, a, b, v_5, v_j\}$ induces a co-chair in $G$, which is a
contradiction.  $\Diamond$

\begin{clm}\label{clm:T2-path}
If $a \in A_1^+$, $b \in B$, and $x \in A_1^-$, then $\{a, b, x\}$ does not induce a path in $G$.
\end{clm}

\no{\it Proof of Claim~\ref{clm:T2-path}}: Assume the contrary. Since $a \in A_1^+$, there exists a vertex $z \in Q$ such that $az \in E$.
By Claim ~\ref{clm:T2-A1+-B}, $ab \notin E$. Thus, by the assumption, $ax \in E$ and $xb \in E$. Also, by Claims  \ref{clm:neigT2} and \ref{clm:T2-A1+-v5-B}, we have $av_5, xv_5 \in E$ and $bv_5 \in E$.
 But, now $\{z, a, x, b, v_5\}$ induces a co-chair in $G$, which is a contradiction.
  $\Diamond$

\vspace{0.2cm}

Suppose that $A_1^+ = \emptyset$.  Then $A^+ = A_5^+$, which is a
clique by Claim~\ref{clm:T2:A5+clique}.  Since $A^+$ is a separator in $G$ between $H$ and
$Q$, it follows that $V(G')\cap A^+$ is a clique separator in $G'$
between $H$ and $V(G')\cap Q$ (which contains $v$); this is a
contradiction to the fact that $G'$ is an atom.  Therefore $A_1^+ \neq
\emptyset$.  Now, Claims~\ref{clm:neigT2} and~\ref{clm:T2-A2+A3+A4+-empty} imply that $N(v_4) \subseteq \{v_3,
v_5\}\cup B\cup A_5^+$, and  Claims~\ref{clm:T2:A5+clique}, \ref{clm:T2-A1+-B}, and~\ref{clm:T2-path} imply that $A_5^+ \cup \{v_2,
v_3, v_5\}$ is a clique separator between $\{v_4\}$ and $Q$ in $G$.
Hence $V(G')\cap (A_5^+ \cup \{v_2, v_3, v_5\})$ is a clique separator
between $\{v_4\}$ and $V(G')\cap Q$ in $G'$, again a contradiction to
the fact that $G'$ is an atom.
\end{proof}

\begin{theorem} \label{mwis-SSCH*-time}
The MWIS problem can be solved in polynomial time for ($S_{1, 2, 2}$, $S_{1, 1, 3}$,  co-chair, $H^*$)-free graphs.
\end{theorem}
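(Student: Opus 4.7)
The plan is a straightforward assembly of previously established facts in the paper. Since the class of ($S_{1, 2, 2}$, $S_{1, 1, 3}$, co-chair, $H^*$)-free graphs is hereditary, my first step is to invoke Theorem~\ref{thm:LM}: it suffices to exhibit a polynomial-time MWIS algorithm on every prime graph $G$ in this class.

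So assume $G$ is prime. By Theorem~\ref{SSCH*-free-implies-Gem-free}, every atom $G'$ of $G$ is nearly gem-free. Each such $G'$ is an induced subgraph of $G$, hence still ($S_{1, 2, 2}$, $S_{1, 1, 3}$, co-chair, $H^*$)-free; and for every vertex $v$ of $G'$, the induced subgraph $G' \setminus N[v]$ is additionally gem-free, so it lies in the class of ($S_{1, 2, 2}$, $S_{1, 1, 3}$, co-chair, gem)-free graphs. By Theorem~\ref{mwis-SSCG-free-time}, MWIS is polynomial on this latter class, and combined with the ``nearly'' reduction noted right after equation~(1) in Section~\ref{NT}, this yields a polynomial-time MWIS algorithm on every atom of $G$.

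Finally, taking $\mathcal{C}$ to be the class of ($S_{1, 2, 2}$, $S_{1, 1, 3}$, co-chair, gem)-free graphs, Theorem~\ref{thm:Tar} combines the per-atom algorithms along the clique-separator decomposition of $G$ to yield a polynomial-time MWIS algorithm on $G$ itself. A second application of Theorem~\ref{thm:LM} then extends the conclusion from prime graphs to all ($S_{1, 2, 2}$, $S_{1, 1, 3}$, co-chair, $H^*$)-free graphs. The proof is essentially mechanical: there is no substantive obstacle beyond carefully threading the hereditary and primality hypotheses of the meta-theorems through the argument, since all the genuine structural content was already isolated in Theorem~\ref{SSCH*-free-implies-Gem-free}.
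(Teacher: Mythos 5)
Your proposal is correct and follows essentially the same route as the paper: reduce to prime graphs via Theorem~\ref{thm:LM}, use Theorem~\ref{SSCH*-free-implies-Gem-free} to get that every atom is nearly gem-free, invoke Theorem~\ref{mwis-SSCG-free-time} for the gem-free base class, and combine with Theorem~\ref{thm:Tar}. The only differences are cosmetic (applying Theorem~\ref{thm:LM} at the start rather than the end, and spelling out the ``nearly'' reduction that Theorem~\ref{thm:Tar} already encapsulates).
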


\begin{proof} Let $G$ be an ($S_{1, 2, 2}$, $S_{1, 1, 3}$,  co-chair, $H^*$)-free
graph.  First suppose that $G$ is prime.  By
Theorem~\ref{SSCH*-free-implies-Gem-free}, every atom of $G$
is nearly gem-free. Since the MWIS in ($S_{1, 2, 2}$, $S_{1, 1, 3}$, co-chair, gem)-free
graphs can be solved in polynomial time (by Theorem~\ref{mwis-SSCG-free-time}),  MWIS in ($S_{1, 2, 2}$, $S_{1, 1, 3}$, co-chair,  $H^*$)-free
graphs can be solved in polynomial time, by Theorem~\ref{thm:Tar}.  Then the time
complexity is the same when $G$ is not prime, by Theorem
\ref{thm:LM}.
\end{proof}

\subsection{($S_{1, 2, 2}$, $S_{1, 1, 3}$, co-chair)-free graphs}

\begin{theorem}\label{thm:SSC-H*-free}
Let $G= (V, E)$ be a prime ($S_{1, 2, 2}$, $S_{1, 1, 3}$, co-chair)-free graph.  Then every
atom of $G$ is nearly $H^*$-free.
\end{theorem}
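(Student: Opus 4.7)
The plan is to follow the clique-separator paradigm already used in Theorems~\ref{SSD-freeimplies5apple-free} and~\ref{SSCH*-free-implies-Gem-free}. Fix an atom $G'$ of $G$, and suppose for contradiction that there is a vertex $v \in V(G')$ such that the anti-neighborhood $G' \setminus N[v]$ contains an induced copy $H$ of $H^*$, with vertices labelled as in Figure~\ref{HC}. Relative to $(G,v,H)$, introduce the usual sets $A_i, A_i^+, A_i^-, Q, A^+, A^-$ from the last paragraph of Section~\ref{NT}. Since $G$ is prime and co-chair-free, Lemma~\ref{primeSSCfree-forb} supplies the extra forbidden induced subgraphs $H_1,\ldots,H_8$ alongside $S_{1,2,2}$, $S_{1,1,3}$, and the co-chair itself; all of these will be available as tools.

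The first block of claims should pin down $N_H(x)$ for every $x \in N(H)$. Since $H^*$ contains a $K_4$ and a pendant vertex, each edge of that $K_4$ lies in several triangles of $H$; applying the co-chair test to $\{x,p,q,r,s\}$ whenever $x$ has two adjacent neighbors $p,q$ in $H$ forces $N_H(x)$ to be very rigid. I would then show that most $A_i^+$ are empty, and that any non-empty $A_i^+$ has $N_H(x)$ equal to one of a small number of explicit sets. Next, I would establish that $A^+$ is a clique and that vertices from different admissible $A_i^+$ are complete to each other: any non-edge $xy$ in $A^+$, combined with a vertex $z \in Q$ adjacent to $x$ (or $y$) and a suitable triangle of $H$, would produce an induced co-chair, $S_{1,1,3}$, $S_{1,2,2}$, or one of $H_1,\ldots,H_8$.

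If these claims yield that $A^+$ alone is a clique, we are done immediately: since $A^+$ separates $H$ from $Q$ in $G$, the set $V(G')\cap A^+$ is a clique separator of $G'$ between $H$ and the component of $Q$ that contains $v$, contradicting that $G'$ is an atom. The main obstacle — paralleling the $A_1^+\neq\emptyset$ dichotomy in the proof of Theorem~\ref{SSCH*-free-implies-Gem-free} — is the situation where some $A_i^+$ with especially small $N_H(x)$ (for instance, vertices attached only at the pendant or at a single $K_4$-vertex) prevents $A^+$ from being a separator by itself. In that case I would identify a fixed small subset $S \subseteq V(H)$, consisting of universal-type vertices of the $K_4$ inside $H^*$, and show that $A^+ \cup S$ still induces a clique and separates a vertex of $H\setminus S$ from $Q$ in $G$, again yielding a clique separator in $G'$ and the desired contradiction. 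Verifying completeness between $S$ and the residual $A_i^+$, and ruling out all remaining non-edges using the combined $(S_{1,2,2},S_{1,1,3},\text{co-chair},H_1,\ldots,H_8)$-freeness, is where I expect the bulk of the technical case work to occur.
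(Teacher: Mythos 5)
Your outline follows the same general strategy as the paper (analyze the sets $A_i^+$ relative to the copy of $H^*$, show that $A^+$ is a clique, and contradict atomicity via the clique separator $V(G')\cap A^+$), but as written it has a genuine gap: all of the substantive content of the theorem is deferred to ``technical case work'' that you never carry out, and the one mechanism you do specify does not suffice at the critical point. Concretely, the hard part is the set $A_6^+$ of vertices adjacent to \emph{all} of $V(H^*)$. For two non-adjacent $x,y\in A_6^+$, your recipe (``a non-edge $xy$ plus a vertex $z\in Q$ adjacent to $x$ plus a triangle of $H$ gives a forbidden subgraph'') only works if $z$ distinguishes $x$ and $y$; nothing guarantees that a $Q$-neighbor of $x$ misses $y$. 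The paper instead invokes primeness: it takes a co-connected component $X$ of $G[A_6^+]$ of size at least two, uses that $X$ is not a module to obtain a distinguishing vertex $z$, and then must show that $z$ has \emph{no} neighbor in $H^*$ before the co-chair $\{z,x,y,v_1,v_2\}$ appears. That last step in turn requires a chain of prior claims you never address, most notably that $A_6^+$ is complete to every $A_i^-$ (your proposal ignores $A^-$ entirely) and that $A_1=\emptyset$ and $A_2^+=A_4^+=A_5^+=\emptyset$, which themselves lean on the auxiliary forbidden subgraphs $H_1,\dots,H_8$ of Lemma~\ref{primeSSCfree-forb} rather than on co-chair tests alone.

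A secondary point: your contingency plan of augmenting $A^+$ with a small set $S\subseteq V(H^*)$ of $K_4$-vertices (modeled on the $A_1^+\neq\emptyset$ dichotomy in Theorem~\ref{SSCH*-free-implies-Gem-free}) never materializes here, because in the $H^*$ case one can prove outright that $A_1=\emptyset$ (using $H_4$, $H_6$, $H_7$), so $A^+=A_3^+\cup A_6^+$ and the plain separator argument closes the proof. Anticipating the dichotomy is not wrong, but leaving open which scenario occurs, together with the unproved cliqueness of $A_6^+$ and the completeness claims involving $A^-$, means the proposal is a plan rather than a proof; the theorem's difficulty lives exactly in the claims you postponed.
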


\begin{proof}
Let $G'$ be an atom of $G$.  We want to show that $G'$ is nearly
$H^*$-free, so let us assume on the contrary that there is a vertex $v
\in V(G')$ such that the anti-neighborhood of $v$ in $G'$ contains an
induced $H^*$ as shown in Figure~\ref{HC}.   For $i=1, \ldots, 6$ we define sets $A_i$,
$A_i^+$, $A_i^-$, and $Q$, with respect to $G$, $v$ and $H$, as in the last paragraph of Section~\ref{NT}.  Then
we have the following properties:

\medskip

\begin{clm}\label{clm:A1}
 $A_1 = \emptyset$.
\end{clm}

\no{\it Proof of Claim~\ref{clm:A1}}. Suppose to the contrary that $A_1 \neq \emptyset$, and let $x \in A_1$.
Then:  (i) If $N_{H^*}(x)$ is either $\{v_1\}$ or $\{v_3\}$, then $\{x\} \cup V(H^*)$ induces a graph which is isomorphic to $H_7$ in $G$, a contradiction to Lemma~\ref{primeSSCfree-forb}. (ii) If $N_{H^*}(x)$ is either $\{v_2\}$ or $\{v_4\}$, then $\{x, v_1, v_2, v_3, v_4\}$ induces a co-chair in $G$, which is a contradiction. (iii) If $N_{H^*}(x) = \{v_5\}$, then  $\{x\} \cup V(H^*)$ induces a graph which is isomorphic to $H_6$ in $G$, a contradiction to Lemma~\ref{primeSSCfree-forb}. (iv) If $N_{H^*}(x) = \{v_6\}$, then  $\{x\} \cup V(H^*)$ induces a graph which is isomorphic to $H_4$ in $G$, a contradiction to Lemma~\ref{primeSSCfree-forb}.
So, the claim holds. $\Diamond$

\medskip
\begin{clm}\label{clm:A2}
 If $x \in A_2$, then $N_{H^*}(x) \in \{\{v_1, v_2\}, \{v_1, v_4\}, \{v_1, v_5\}, \{v_1, v_6\}, $ $\{v_2, v_3\}, \{v_3, v_4\}, $ $\{v_3, v_5\}, \{v_3, v_6\}, \{v_5, v_6\}\}$.
\end{clm}

\no{\it Proof of Claim~\ref{clm:A2}}. For, otherwise if $N_{H^*}(x) \in \{\{v_1, v_3\}, \{v_2, v_5\}, \{v_2, v_6\},$ $ \{v_4, v_5\}, \{v_4, v_6\}\}$, then $\{x, v_1, v_2, v_3, v_4\}$ induces a co-chair in $G$, which is a contradiction, and if $N_{H^*}(x)$ is $\{v_2, v_4\}$, then $\{x\} \cup V(H^*)$ induces a graph which is isomorphic to $H_5$ in $G$, a contradiction to Lemma~\ref{primeSSCfree-forb}. So, the claim holds. $\Diamond$

\medskip

\begin{clm}\label{clm:A2+}
    $A_2^+ = \emptyset$.
\end{clm}

\no{\it Proof of Claim~\ref{clm:A2+}}. Suppose to the contrary that $A_2^+ \neq \emptyset$, and let $x \in A_2^+$.  Then there is a vertex $z$ in $Q$ such that
$xz \in E$. We use  Claim~\ref{clm:A2} to derive a contradiction to our assumption as follows: Now, if $N_{H^*}(x) \in \{\{v_1, v_2\}, \{v_1, v_4\}, \{v_2, v_3\}, \{v_3, v_4\}\}$, then  it follows that there is a clique $\{p,q,r\}\subset V(H^*)$ such that $xp, xq \in E$ and $xr\notin
E$.  But, then $\{z, x, p, q, r\}$ induces a co-chair in $G$, which is a contradiction. Next, if $N_{H^*}(x) \in \{\{v_1, v_5\},$ $ \{v_1, v_6\}, \{v_3, v_5\}, \{v_3, v_6\}\}$, then $\{z, x, v_1, v_3, v_5, v_6\}$ induces an $S_{1, 2, 2}$ in $G$, which is a contradiction. Finally, if  $N_{H^*}(x)$ is $\{v_5, v_6\}$, then  $\{v_1, v_2, v_3,$ $ v_4, v_5, x, z\}$ induces a graph which is isomorphic to $H_4$ in $G$, a contradiction to Lemma~\ref{primeSSCfree-forb}. So, $A_2^+ = \emptyset$, and the claim holds. $\Diamond$

\medskip
\begin{clm}\label{clm:A3}
 If $x \in A_3$, then $N_{H^*}(x) \in \{\{v_1, v_2, v_4\}, \{v_1, v_3, v_5\}, \{v_1, v_5, v_6\}, $ $\{v_2, v_3, v_4\},  \{v_2, v_4,$ $ v_6\},$ $ \{v_3, v_5, v_6\}\}$.
\end{clm}

\no{\it Proof of Claim~\ref{clm:A3}}. Suppose the contrary. Now, if $v_6 \in N_{H^*}(x)$, then since $x \in A_3$, $|N_{H^*}(x) \cap \{v_1, v_2, v_3, v_4\}| \in \{1, 2\}$. If $|N_{H^*}(x) \cap \{v_1, v_2, v_3, v_4\}| = 2$, then it follows that there is a clique $\{p, q, r\}\subset \{v_1, v_2, v_3, v_4\}$ such that $xp, xq \in E$ and $xr\notin E$.  But, then $\{v_6, x, p, q, r\}$ induces a co-chair in $G$, which is a contradiction. So, $|N_{H^*}(x) \cap \{v_1, v_2, v_3, v_4\}| = 1$, and hence $v_5 \in N_{H^*}(x)$. Since $x \in A_3$ and by our contrary assumption, either $v_2 \in N_{H^*}(x)$ or $v_4 \in N_{H^*}(x)$. But, then $\{v_1, v_2, v_3, v_4, x\}$ induces a co-chair in $G$, which is a contradiction. So, we may assume that $v_6 \notin N_{H^*}(x)$. Now, (i) if $N_{H^*}(x)$ is $\{v_1, v_2, v_3\}$, then $\{x, v_1, v_3, v_4, v_5\}$ induces a co-chair in $G$, (ii) if $N_{H^*}(x)$ is $\{v_2, v_3, v_5\}$, then $\{x, v_2, v_3, v_5, v_6\}$ induces a co-chair in $G$, and (iii) if $N_{H^*}(x)$ is $\{v_1, v_2, v_5\}$, then $\{x, v_1, v_2, v_5, v_6\}$ induces a co-chair in $G$, which are contradictions. Finally, if $N_{H^*}(x)$ is $\{v_2, v_4, v_5\}$, then $\{x\} \cup V(H^*)$ induces a graph which is isomorphic to $H_1$ in $G$, a contradiction to Lemma~\ref{primeSSCfree-forb}.
Hence the claim is proved.  $\Diamond$

\medskip
\begin{clm}\label{clm:A3+}
 If $x \in A_3^+$, then $N_{H^*}(x)$ is either $\{v_1, v_5, v_6\}$ or $\{v_3, v_5, v_6\}$.
\end{clm}

\no{\it Proof of Claim~\ref{clm:A3+}}.  For, otherwise, by Claim~\ref{clm:A3},  $N_{H^*}(x) \in \{\{v_1, v_2, v_4\},$ $ \{v_1, v_3, v_5\}, \{v_2, v_3,$ $ v_4\},  \{v_2, v_4, v_6\}\}$. Since $x\in A_3^+$, there is a vertex $z$ in $Q$ such that
$xz \in E$. Now, if $N_{H^*}(x) \in \{\{v_1, v_2, v_4\},$ $ \{v_1, v_3, v_5\}, \{v_2, v_3, v_4\}\}$, then it follows that there is a clique $\{p,q,r\}\subset V(H^*)$ such that $xp, xq \in E$ and $xr\notin E$.  But, then $\{z, x, p, q, r\}$ induces a co-chair in $G$, which is a contradiction. Next, if $N_{H^*}(x)$ is $\{v_2, v_4, v_6\}$, then $\{v_1, v_2, v_3, v_4, v_6, x, z\}$ induces a graph which is isomorphic to $H_6$ in $G$, a contradiction to Lemma~\ref{primeSSCfree-forb}.
So the claim is proved.  $\Diamond$

\medskip
Let $B_3'$ denotes the set $\{x \in A_3^+ \mid  N_{H^*}(x) = \{v_1, v_5, v_6\}\}$ and let $B_3''$ denotes the set $\{x \in A_3^+ \mid N_{H^*}(x) = \{v_3, v_5, v_6\}\}$.

\medskip
\begin{clm}\label{clm:B3clique}
$B_3'$ and $B_3''$ are cliques in $G$.
\end{clm}

\no{\it Proof of Claim~\ref{clm:B3clique}}. Suppose to the contrary that there exists vertices $x, y \in B_3'$ such that $xy \notin E$. Since $x\in A_3^+$, there exists a  vertex $z$ in $Q$ such that $xz \in E$. Now, if $yz \in E$, then $\{z, x, y, v_5, v_6, v_1, v_3\}$ induces a graph which is isomorphic to $H_5$ in $G$, which contradicts Lemma~\ref{primeSSCfree-forb}, and if $yz \notin E$, then $\{v_5, v_6, x, y, z\}$ induces a co-chair in $G$, which is a contradiction. Hence, $B_3'$ is a clique in $G$. Similarly, $B_3''$ is also a clique in $G$. $\Diamond$

\medskip
\begin{clm}\label{clm:B3empty}
At most one of  $B_3'$ or $B_3''$ is non-empty.
\end{clm}

\no{\it Proof of Claim~\ref{clm:B3empty}}. Suppose the contrary, and let $x \in B_3'$ and $y \in B_3''$. Then since $\{x, y, v_1, v_5, v_6\}$ does not induce a co-chair in $G$, $xy \in E$.    Since $x\in A_3^+$, there exists a  vertex $z$ in $Q$ such that $xz \in E$. Now, if $yz \in E$, then $\{v_2, v_5, x, y, z\}$ induces a co-chair in $G$, which is a contradiction, and if $yz \notin E$, then $\{z, x, y, v_3, v_2, v_4\}$ induces an $S_{1, 1, 3}$ in $G$, which is a contradiction. Hence the claim. $\Diamond$

\medskip
\begin{clm}\label{clm:A4+}
$A_4^+ = \emptyset$.
\end{clm}

\no{\it Proof of Claim~\ref{clm:A4+}}.  Suppose to the contrary that $A_4^+ \neq \emptyset$ and let $x \in A_4^+$. There is a vertex $z$ in $Q$ such that $xz \in E$. Now, if $x$ is adjacent to all the vertices in $\{v_1, v_2, v_3, v_4\}$, then $\{z, x, v_1, v_2, v_4, v_4, v_6\}$ induces a graph which is isomorphic to $H_7$ in $G$, a contradiction to Lemma~\ref{primeSSCfree-forb}. So, we may assume that $x$ is non-adjacent to at least one vertex in
 $\{v_1, v_2, v_3, v_4\}$. Also, since $x \in A_4^+$, $x$ is adjacent to at least two vertices in $\{v_1, v_2, $ $v_3, v_4\}$. Now, if  $N_{H^*}(x)$ is $\{v_2, v_4, v_5, v_6\}$, then $\{x, v_1, v_2, v_5, v_6\}$ induces a co-chair in $G$, a contradiction, and in all the other cases, there is a clique $\{p,q,r\}\subset V(H^*)$ such that $xp, xq \in E$ and $xr\notin E$.  But, then $\{z, x, p, q, r\}$ induces a co-chair in $G$, which is a contradiction.
 $\Diamond$

 \medskip
\begin{clm}\label{clm:A5+}
$A_5^+ = \emptyset$.
\end{clm}

\no{\it Proof of Claim~\ref{clm:A5+}}.  Suppose to the contrary that $A_5^+ \neq \emptyset$ and let $x \in A_5^+$.  Then there is a vertex $z$ in $Q$ such that $xz \in E$. Suppose $x$ is adjacent to all the vertices in $\{v_1, v_2, v_3, v_4\}$.
Further, if $x$ is adjacent to $v_5$, then $\{x, v_2, v_3, v_5, v_6\}$ induces a co-chair in $G$, which is a contradiction,  and if $x$ is adjacent to $v_6$, then $\{x\} \cup V(H^*)$ induces a graph which is isomorphic to $H_2$ in $G$, a contradiction to Lemma~\ref{primeSSCfree-forb}. So, we may assume that $x$ is non-adjacent to exactly one vertex in $\{v_1, v_2, v_3, v_4\}$. Then, there is a clique $\{p,q,r\}\subset V(H^*)$ such that $xp, xq \in E$ and $xr\notin E$.  But, then $\{z, x, p, q, r\}$ induces a co-chair in $G$, which is a contradiction.  $\Diamond$

\medskip
\begin{clm}\label{clm:A3+-A6+-clique}
$A_3^+$ is complete to $A_6^+$.
\end{clm}

\no{\it Proof of Claim~\ref{clm:A3+-A6+-clique}}.  Suppose to the contrary that there exist vertices $x \in A_3^+$ and $y \in A_6^+$ such that $xy \notin E$.
Then by Claim~\ref{clm:A3+}, $N_{H^*}(x)$ is either $\{v_1, v_5, v_6\}$ or $\{v_3, v_5, v_6\}$. Then either $\{v_3, y, v_5, v_6, x\}$ or $\{v_1, y, v_5, v_6, x\}$ induces a co-chair in $G$, a contradiction.  $\Diamond$

\medskip
\begin{clm}\label{clm:A6+-A--clique}
 For each $i\in\{2,\ldots, 6\}$, $A_6^+$ is
complete to $A^-_i$.
\end{clm}

\no{\it Proof of Claim~\ref{clm:A6+-A--clique}}.  Assume the contrary.  Let $x \in A_6^+$
and $y \in A^-_i$ be such that $xy \notin E$.  Since $x \in A_6^+$,
there exists $z \in Q$ such that $xz \in E$.  Now, if there exists vertices $p, q \in
V(H^*)$ such that $pq \in E$ and $py, qy \in E$, then $\{z, x, p,
q, y\}$ induces a co-chair in $G$, which is a contradiction. So, we assume that $N_{H^*}(y)$ is an independent set. Then by the above claims on $A_i^+$, $i \ge 2$, we have $N_{H^*}(y) \in \{\{v_1, v_5\}, \{v_1, v_6\}, \{v_3, v_5\}, \{v_3, v_6\},  \{v_2, v_4, v_6\}\}$. Now, if $N_{H^*}(y) \in \{\{v_1, v_5\}, \{v_3, v_5\}\}$, then $\{z, x, y\} \cup V(H^*)$ induces a graph which is isomorphic to $H_8$ in $G$, a contradiction to Lemma~\ref{primeSSCfree-forb}. So, $N_{H^*}(y) \in \{\{v_1, v_6\}, \{v_3, v_6\},  \{v_2, v_4, v_6\}\}$. But, then $\{v_1, v_4, v_5, x, y\}$ induces a co-chair in $G$ (if $N_{H^*}(y) = \{v_1, v_6\})$, and $\{v_3, v_4, v_5, x, y\}$ induces a co-chair in $G$ (if $N_{H^*}(y) = \{v_3, v_6\})$, which are contradictions. Finally, if  $N_{H^*}(y) = \{v_2, v_4, v_6\}$, then $\{z, x, y\} \cup V(H^*)$ induces a graph which is isomorphic to $H_3$ in $G$, a contradiction to Lemma~\ref{primeSSCfree-forb}.
  $\Diamond$

\medskip
\begin{clm}\label{clm:A6+-clique}
$A_6^+$ is a clique.
\end{clm}

\no{\it Proof of Claim~\ref{clm:A6+-clique}}: Suppose the contrary.  Then $G[A_6^+]$ has
a co-connected component $X$ of size at least~$2$.  Since $G$ is
prime, $X$ is not a non-trivial module in $G$, so there is a vertex $z$
in $V(G)\setminus X$ that distinguishes two vertices $x$ and $y$ of
$X$, and since $X$ is co-connected we can choose $x$ and $y$
non-adjacent.  We may assume (wlog.) that $xz \in E$  and $yz \notin E$. Clearly $z\notin V(H^*)$ and $z \notin A_6^+$.  By Claims~\ref{clm:A1} and \ref{clm:A6+-A--clique}, $z \notin A^-$, and by Claims~\ref{clm:A1}, \ref{clm:A2+},  \ref{clm:A4+}, \ref{clm:A5+}, and \ref{clm:A3+-A6+-clique}, we have $z \notin A^+$.
Hence, $z$ has no neighbor in $H^*$, and we see that $\{z, x, y, v_1, v_2\}$ induces a
co-chair in $G$, a contradiction.  $\Diamond$

\medskip

By Claims~\ref{clm:A1}, \ref{clm:A2+}, \ref{clm:A4+}, \ref{clm:A5+}, we have $A^+ = A_3^+ \cup A_6^+$, which is a
clique by Claims~\ref{clm:B3clique}, \ref{clm:B3empty}, \ref{clm:A3+-A6+-clique}, and \ref{clm:A6+-clique}.  Since $A^+$ is a separator in $G$ between $H^*$ and
$Q$, it follows that $V(G')\cap A^+$ is a clique separator in $G'$
between $H^*$ and $V(G')\cap Q$ (which contains $v$); this is a
contradiction to the fact that $G'$ is an atom.
\end{proof}

\begin{theorem} \label{mwis-SSC-time}
The MWIS problem can be solved in polynomial time for ($S_{1, 2, 2}$, $S_{1, 1, 3}$, co-chair)-free graphs.
\end{theorem}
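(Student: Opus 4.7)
The plan is to assemble this result from three ingredients that have already been put in place: the structural Theorem~\ref{thm:SSC-H*-free}, which says that every atom of a prime ($S_{1,2,2}$, $S_{1,1,3}$, co-chair)-free graph is nearly $H^*$-free; the general reduction Theorem~\ref{thm:LM}, which reduces MWIS on a hereditary class to the prime graphs of that class; and the atom-lifting Theorem~\ref{thm:Tar}, which converts a polynomial-time MWIS algorithm on a class $\mathcal{C}$ into one on any hereditary class whose atoms are nearly $\mathcal{C}$.

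First, I would appeal to Theorem~\ref{thm:LM}. The class of ($S_{1,2,2}$, $S_{1,1,3}$, co-chair)-free graphs is clearly hereditary, so it suffices to exhibit a polynomial-time MWIS algorithm on the prime members.

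Next, let $G$ be a prime ($S_{1,2,2}$, $S_{1,1,3}$, co-chair)-free graph. By Theorem~\ref{thm:SSC-H*-free}, every atom of $G$ is nearly $H^*$-free, and since the class is hereditary every such atom is itself ($S_{1,2,2}$, $S_{1,1,3}$, co-chair)-free. Hence each atom lies in the class of nearly ($S_{1,2,2}$, $S_{1,1,3}$, co-chair, $H^*$)-free graphs. Theorem~\ref{mwis-SSCH*-time} already supplies a polynomial-time MWIS algorithm for ($S_{1,2,2}$, $S_{1,1,3}$, co-chair, $H^*$)-free graphs, so Theorem~\ref{thm:Tar} immediately yields a polynomial-time algorithm on $G$. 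Finally, the non-prime case is absorbed by Theorem~\ref{thm:LM} without changing the polynomial bound.

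The main obstacle has already been discharged in Theorem~\ref{thm:SSC-H*-free}, whose delicate case analysis on the sets $A_i^+$ attached to an induced $H^*$ is what forces $A^+$ to be a clique and hence a clique separator in any atom. Given that structural fact and the pre-existing framework theorems, the present statement is a short, mechanical assembly, so I do not anticipate any further difficulty beyond checking that each invoked theorem applies verbatim to the class at hand.
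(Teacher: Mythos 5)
Your proposal is correct and follows essentially the same route as the paper: reduce to prime graphs via Theorem~\ref{thm:LM}, use Theorem~\ref{thm:SSC-H*-free} to get that every atom is nearly $H^*$-free, and then combine Theorem~\ref{mwis-SSCH*-time} with Theorem~\ref{thm:Tar} to conclude. No substantive difference from the paper's own argument.
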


\begin{proof} Let $G$ be an ($S_{1, 2, 2}$, $S_{1, 1, 3}$, co-chair)-free
graph.  First suppose that $G$ is prime.  By
Theorem~\ref{thm:SSC-H*-free}, every atom of $G$
is nearly $H^*$-free.  Since the MWIS problem in ($S_{1, 2, 2}$, $S_{1, 1, 3}$, $H^*$, co-chair)-free graphs can be solved in  polynomial time (by Theorem~\ref{mwis-SSCH*-time}), MWIS can be solved in polynomial time for $G$, by Theorem~\ref{thm:Tar}.  Then the time
complexity is the same when $G$ is not prime, by Theorem
\ref{thm:LM}.
\end{proof}

\noindent{\bf Acknowledgement}: The author sincerely thanks Prof.~Vadim V. Lozin and Prof.~Fr\'{e}d\'{e}ric Maffray for the fruitful discussions, and for their valuable suggestions and comments.

\small

\end{document}